\documentclass[acmsmall,screen,nonacm]{acmart}
\usepackage{amsthm}
\newtheorem*{problem}{Problem}
\AtBeginDocument{}

\usepackage{microtype}
\usepackage{mathtools}
\usepackage{subcaption}
\usepackage{booktabs}
\usepackage{xspace}
\usepackage[basic]{complexity}
\usepackage{siunitx}
\usepackage{multirow}
\usepackage{graphicx}

\AtBeginDocument{%
    }

\setcopyright{acmlicensed}
\copyrightyear{2026}
\acmYear{2026}

\graphicspath{{pics/}}

\DeclareMathOperator{\In}{In}
\DeclareMathOperator{\Out}{Out}
\DeclareMathOperator{\prd}{prod}
\DeclareMathOperator{\cns}{cons}
\newcommand{\tcost}{C^{\text{tr}}}
\newcommand{\opcost}{C^{\text{op}}}
\DeclareMathOperator{\tw}{tw}
\DeclareMathOperator{\St}{state}
\DeclareMathOperator{\OPT}{OPT}
\newcommand{\layout}[1]{[#1]}

\definecolor{defblue}{rgb}{0.121,0.47,0.705}
\definecolor{defred}{rgb}{0.7,0.37,0.37}
\definecolor{defgreen}{rgb}{0.01,0.65,0.20}

\newcommand{\df}[1]{\textcolor{defblue}{\emph{#1}}}
\newcommand{\prob}[1]{\textsc{#1}\xspace}
\newcommand{\alg}[1]{\textsc{#1}\xspace}

\begin{document}

\author{Clemens Eisenhofer}
\email{clemens.eisenhofer@tuwien.ac.at}
\orcid{0000-0003-0339-1580}
\affiliation{%
    \institution{TU Wien}
    \city{Vienna}
    \state{Vienna}
    \country{Austria}
}

\author{Yuwen Jia}
\email{michelle2010tx@gmail.com}
\orcid{0009-0007-2518-2030}
\affiliation{%
    \institution{Amazon}
    \city{Cupertino}
    \state{California}
    \country{USA}
}

\author{Daniel Kroening}
\email{daniel.kroening@gmail.com}
\orcid{0000-0002-6681-5283}
\affiliation{%
    \institution{Amazon}
    \city{Cupertino}
    \state{California}
    \country{USA}
}

\author{Sergey Pupyrev}
\email{spupyrev@gmail.com}
\orcid{0000-0003-4089-673X}
\affiliation{%
    \institution{Amazon}
    \city{Cupertino}
    \state{California}
    \country{USA}
}

\renewcommand{\shortauthors}{Eisenhofer et al.}

\begin{CCSXML}
    <ccs2012>
    <concept>
    <concept_id>10011007.10011006.10011041</concept_id>
    <concept_desc>Software and its engineering~Compilers</concept_desc>
    <concept_significance>500</concept_significance>
    </concept>
    <concept>
    <concept_id>10003752.10003809.10010052</concept_id>
    <concept_desc>Theory of computation~Parameterized complexity and exact
    algorithms</concept_desc>
    <concept_significance>300</concept_significance>
    </concept>
    <concept>
    <concept_id>10003752.10003777.10003779</concept_id>
    <concept_desc>Theory of computation~Problems, reductions and completeness</concept_desc>
    <concept_significance>300</concept_significance>
    </concept>
    <concept>
    <concept_id>10010147.10010257</concept_id>
    <concept_desc>Computing methodologies~Machine learning</concept_desc>
    <concept_significance>300</concept_significance>
    </concept>
    </ccs2012>
\end{CCSXML}

\ccsdesc[500]{Software and its engineering~Compilers}
\ccsdesc[300]{Theory of computation~Parameterized complexity and exact algorithms}
\ccsdesc[300]{Theory of computation~Problems, reductions and completeness}
\ccsdesc[300]{Computing methodologies~Machine learning}


\title[Tensor Seeks Layout]{Tensor Seeks Layout: Formalizing Layout Selection for ML~Compilers}

\begin{abstract}
    Modern machine learning compilers select memory layouts for tensors to minimize execution cost
    while satisfying hardware constraints. Layout selection is inherently global: an operator may be
    fastest under one layout while its consumers prefer another, and aligning these preferences
    requires explicit layout conversions that can regress model performance. Despite its practical
    importance, layout selection lacks a formal theoretical basis, causing current compilers
    to rely on ad-hoc heuristics.

    This paper presents the first formal study of layout selection in machine learning compilers. We
    formulate the problem as combinatorial optimization over dataflow graphs, minimizing the sum of
    operator execution costs and the per-tensor cost of these conversions. Our theoretical analysis
    shows
    that optimal layout selection is computationally hard, even when restricted to programs
    containing only matrix multiplications over two-dimensional tensors. On the positive side, we
    design an optimal polynomial-time algorithm for dataflow graphs of bounded treewidth. For
    general instances, we give a weighted MaxSAT encoding that an off-the-shelf solver can optimize.

    The formulation unifies several existing layout optimization strategies, including XLA's layout
    assignment, partition dimension selection in systolic array compilers, and layout planning in
    mobile GPU optimizers. We implement the formalization in a production compiler for an AI
    accelerator and measure the execution time of the compiled models under greedy heuristics, the
    compiler's rule-based strategy, and an optimal solver. Simple heuristics degrade execution time
    by up to $5\times$ on some workloads. Where the compiler's cost model is accurate, the solver
    matches or beats the rule-based strategy. On workloads with complex data movement it falls
    behind, and since the solver minimizes the stated objective exactly, that gap isolates cost-model error from search quality, pointing to where compiler effort actually pays off.
\end{abstract}

\maketitle

\newpage

\section{Introduction}

Executing large-scale deep learning models~\cite{BengioLH21} efficiently requires specialized
hardware accelerators such as GPUs, TPUs, and custom ASICs. Because workloads are large and models
vary, compilers have become the natural place to make performance decisions, rather than hand-tuned
libraries. In practice, a large fraction of end-to-end speedups in training and
inference come not from new arithmetic kernels, but from reducing and reshaping data
movement~\cite{datamove,xu2025,attention,PagedAttention}.

Modern machine learning (ML) compilers~\cite{tvm18,mlir,triton,taso} perform various optimizations,
including operator fusion, tiling, and instruction
scheduling~\cite{autotvm,ansor,roller,rammer,flextensor}. Among these, layout
optimization, that is, the selection of how multi-dimensional tensors are arranged in memory or
mapped to hardware resources, has a direct impact on execution
performance~\cite{smartmem,neocpu,welder,alt}. A \emph{tensor layout} specifies the physical
representation of a multi-dimensional array on the target
hardware~\cite{linearlayouts,layoutabstractions}. For instance, a
four-dimensional tensor storing batched images can be laid out as \texttt{NHWC} (batch, height,
width, channels) or \texttt{NCHW} (batch, channels, height, width), with each ordering favoring
different access patterns~\cite{LXY22,vtensor}. Operators are
layout-sensitive~\cite{tensorcomp,layoutabstractions}: a convolution may prefer \texttt{NHWC} on one
backend and \texttt{NCHW} on another~\cite{cudnn,LXY22}; a matrix multiplication may admit two
symmetric variants corresponding to transposing its operands; an accelerator instruction may require
a specific dimension to be partitioned across on-chip processing elements or distributed across
multiple devices~\cite{autosa,spatial,alpa}. When a producer emits a tensor in a layout incompatible
with its consumer's expectation, the compiler must insert an explicit layout conversion, often
implemented as DMA (Direct Memory Access) transfers and reformatting code.
Whether a conversion must be materialized at all is itself target-dependent. On CPUs and GPUs,
a kernel can sometimes absorb a mismatch by iterating over the data in a different order: BLAS
routines accept transposed operands, and loop interchange can emulate a layout change. On systolic
array accelerators, by contrast, the dimension streamed into the compute array is fixed by the
hardware datapath, so conversions involve physical data movement that cannot be folded into the
consuming operator. We discuss this distinction in Section~\ref{sec:conversions}. As we argue in
Section~\ref{sec:evaluation}, these conversions add measurable overhead, which can lead to up to
$5\times$ slowdown in model execution. The extra
operators can also inflate compilation time by increasing graph size and reducing opportunities for
operator fusion~\cite{datamove,NiuGWAR21}.

\subsection{Layout Selection}

\begin{figure}[!t]
    \centering
    \begin{subfigure}[b]{.22\linewidth}
        \centering
        \includegraphics[page=5,height=5cm]{teaser}
        \caption{ML model}
        \label{fig:ex1}
    \end{subfigure}
    \begin{subfigure}[b]{.38\linewidth}
        \centering
        \includegraphics[page=6,height=5cm]{teaser}
        \caption{}
        \label{fig:ex2}
    \end{subfigure}
    \begin{subfigure}[b]{.38\linewidth}
        \centering
        \includegraphics[page=7,height=5cm]{teaser}
        \caption{}
        \label{fig:ex3}
    \end{subfigure}
    \caption{Two layout assignments for the computation in~(a). Layouts are row-major (RM) or
        column-major~(CM); dimensions are in storage order. (b)~All tensors remain in RM.
        (c)~Inputs are transposed to CM, and the output tensor $R$ is transposed back.}
    \Description[Three panels: a small tensor program and two layout assignments for
    it]{Panel~(a) lists a program \texttt{func f(A,B)} computing
        $C=\mathrm{matMul}(A,B)$, then $D=\mathrm{reduce}(C,\mathrm{dim}=0)$, then
        $R=\mathrm{add}(D,B)$, returning~$R$. Panels~(b) and~(c) show the same dataflow graph
        twice, with operators as boxes, tensors as labelled arrows, and each tensor's element
        grid drawn beside it in storage order. In~(b) every tensor is row-major: inputs
        $A\,[4\times 8]$ and $B\,[8\times 3]$ feed \texttt{matMul}, producing $C\,[4\times 3]$;
        \texttt{reduce} in its RM form produces $D\,[1\times 3]$; \texttt{add} combines $D$ with
        $B$ (broadcasting $D$ over the rows of $B$) to give $R\,[8\times 3]$, and no conversion
        is needed. In~(c) the inputs arrive transposed as $A^\top[8\times 4]$ and
        $B^\top[3\times 8]$, so \texttt{matMul} yields $C^\top[3\times 4]$, the cheaper CM form
        of \texttt{reduce} yields $D^\top[3\times 1]$, and \texttt{add} yields
        $R^\top[3\times 8]$; a final \texttt{transpose} operator, shaded differently from the
        compute operators, converts $R^\top$ back to $R\,[8\times 3]$.}
    \label{fig:layout-example}
\end{figure}

This paper studies tensor layout selection as a first-class compilation problem. To illustrate,
consider the computation shown in Figure~\ref{fig:ex1}: a matrix multiplication followed by a
reduction and an element-wise addition, where input tensor $B$ is shared between the \texttt{matmul}
and the \texttt{add}. Each tensor can be stored in row-major (RM) or column-major (CM) order, and
each operator's execution time depends on the layouts of its inputs and outputs. On many
accelerators, the \texttt{matmul} admits two equivalent forms, $C = A \times B$ and
$C^\top = B^\top \times A^\top$, that produce $C$ in different layouts. The \texttt{reduce} operator
collapses the matrix $C$ to a vector by summing along columns, and its cost depends on the layout:
columns are strided in RM but contiguous in CM. The \texttt{add} requires its
operands to share a layout, coupling $D$'s layout back to~$B$'s.

Figure~\ref{fig:layout-example} compares two layout assignments for this computation. In~(b), all
tensors use RM: the \texttt{matmul} is cheap but the \texttt{reduce} pays the strided-column cost.
In~(c), the function takes transposed inputs and applies the transposed \texttt{matmul}, so the reduction
becomes cheap at the cost of an explicit layout conversion for the output tensor,~$R$.
Which assignment is preferable depends on the costs: if the \texttt{transpose} costs more than the
savings on the reduction, (b) is the better choice; otherwise, (c) is. In real
models with hundreds or thousands of operators, such dependencies multiply across the dataflow
graph. It is the compiler's task to select layouts for all tensors such that the total operator and
data transformation cost is minimized.

This challenge arises across diverse hardware targets and compilation contexts:
\begin{itemize}
    \item {\bf Systolic array accelerators} (e.g., TPUs, AWS Trainium) distribute tensors
    across parallel processing elements by splitting along a chosen dimension, called the
    \emph{partition dimension}. Different choices of partition dimension enable different
    instruction sequences with varying efficiency~\cite{FuZFHEHSWD024,neuron_compiler}.

    \item {\bf GPU backends} in compilers like XLA and TVM must choose between data formats such as \texttt{NCHW}
    and \texttt{NHWC} for convolutions, with the preferred choice depending on convolution
    parameters and downstream operations~\cite{xla,tvm18}.

    \item {\bf Mobile and edge deployments} operate under tight memory budgets where unnecessary
    layout conversions can degrade performance or prevent model execution
    altogether~\cite{smartmem,neocpu}.
\end{itemize}

In general, given a dataflow graph, each tensor admits a finite set of feasible layouts, each
operator has a set of admissible input/output layout combinations with associated performance cost,
and converting a tensor between layouts has an explicit cost. The compiler must assign layouts
across the graph to minimize the total execution time of operators plus the time needed for inserted
conversions, subject to hardware feasibility constraints. Although many tools implicitly target this
objective, it is rarely formalized~\cite{xla,tvm18,onednn,smartmem,neocpu}. Instead, most modern
compilers rely on a chain of heuristic graph passes: pick a ``good'' layout locally, propagate
preferences forward and backward, and resolve remaining conflicts by inserting transposes.
Unfortunately, such local approaches can produce poor results when operator preferences conflict
across the graph, as an example in Figure~\ref{fig:local_vs_global} illustrates. Each \texttt{Conv}
prefers \texttt{NHWC} and each \texttt{BatchNorm} prefers \texttt{NCHW}, so a per-operator strategy
assigns different layouts to adjacent operators. Aligning them for the downstream \texttt{Add} then
requires a \texttt{Transpose} after every \texttt{BatchNorm} (Figure~\ref{fig:localB}). A global
view gives a better solution: choosing the less efficient \texttt{NHWC} for each \texttt{BatchNorm}
removes all transposes (Figure~\ref{fig:localC}).

\subsection{Our Contributions}

\definecolor{colconv}{rgb}{0.97,0.81,0.8}
\definecolor{colbn}{rgb}{0.99,0.9,0.8}
\definecolor{coladd}{rgb}{0.84,0.91,0.83}

\begin{figure}[!t]
    \centering
    \begin{subfigure}[b]{.15\linewidth}
        \centering
        \includegraphics[page=4,width=0.95\textwidth]{teaser}
    \end{subfigure}
    \begin{subfigure}[b]{.4\linewidth}
        \centering
        \includegraphics[page=2,width=0.95\textwidth]{teaser}
        \caption{Locally-optimal layout}
        \label{fig:localB}
    \end{subfigure}
    \begin{subfigure}[b]{.4\linewidth}
        \centering
        \includegraphics[page=3,width=0.95\textwidth]{teaser}
        \caption{An optimized layout}
        \label{fig:localC}
    \end{subfigure}
    \caption{Local versus global layout selection on a graph of
        \colorbox{colconv}{\texttt{Conv}} and \colorbox{colbn}{\texttt{BatchNorm}} operators
        feeding a shared \colorbox{coladd}{\texttt{Add}}. Layouts are \texttt{NHWC} or
        \texttt{NCHW}; inserted \texttt{Transpose} operators are shown explicitly.}
    \Description[Legend plus two layout assignments of a fan-out/fan-in graph, one with
    transposes and one without]{The leftmost panel is a legend of four operator boxes:
        convolution (prefers \texttt{NHWC}), batch norm (prefers \texttt{NCHW}), add (accepts
        \texttt{NHWC} or \texttt{NCHW}), and transpose. The two remaining panels each show the
        same graph: a single \texttt{Input} node fanning out to a row of alternating convolution
        and batch-norm operators (an ellipsis marks the omitted middle ones), which all fan back
        in to one shared \texttt{Add} that feeds \texttt{Output}. In the locally-optimal
        assignment~(b), each operator takes its own preferred layout, so the convolutions are
        \texttt{NHWC} and the batch-norms \texttt{NCHW}; because the shared \texttt{Add} runs in
        \texttt{NHWC}, a \texttt{transpose} box is inserted on the outgoing edge of every
        batch-norm. In the globally optimized assignment~(c), every operator including the
        batch-norms runs in \texttt{NHWC}, so all edges reach the \texttt{Add} directly and no
        transpose boxes appear at all.}
    \label{fig:local_vs_global}
\end{figure}

Layout optimization requires an explicit mathematical model rather than an ad-hoc
collection of heuristics. We propose a model built from a finite
set of layout choices for each tensor, a cost function expressing the execution performance for
each layout combination, and transpose costs for converting tensors between layouts. It
subsumes several existing ``ad hoc'' layout optimizations under a single objective. XLA's
layout assignment chooses minor-to-major dimension orderings while paying copy costs to reconcile
mismatches~\cite{xla,xla_layout_impl}. On systolic array accelerators, compilers such as AWS Neuron
Graph Compiler~(see Appendix~\ref{app:compiler} and \cite{neuron_compiler,FuZFHEHSWD024}) select
partition dimensions for tensors in on-chip buffers, which maps directly to our formulation.
Heuristic systems such as SmartMem~\cite{smartmem} and NeoCPU~\cite{neocpu} correspond to restricted
variants where layout domains are pruned and layout-flexible operators are treated as nearly
indifferent, so propagation approximately minimizes transpose costs. Stating all four under one
objective lets us compare their design choices and the restriction each one makes for
tractability.

\prob{Layout Selection} is \NP-hard even when each operator has only a constant number of inputs and
each tensor admits only a small number of layouts, and hardness persists in restricted
settings that resemble common matmul-heavy graphs. This rules out a universal polynomial-time
algorithm with strong worst-case guarantees. The formulation does, however, expose structure
that can be exploited. For dataflow graphs with bounded \emph{treewidth} (a structural property of
graphs measuring their ``tree-likeness''~\cite{RS1984,Bodlaender93}), the problem admits an optimal
polynomial-time algorithm; that is, the problem is \emph{fixed-parameter tractable} with respect to
treewidth together with the number of layout configurations per operator. Such tractability is a
known phenomenon in algorithm
design~\cite{Courcelle90,ArnborgLS91,Thorup98}; our contribution is an explicit dynamic program for
\prob{Layout Selection} that is directly implementable inside a compiler
(Section~\ref{sec:treewidth}). For general instances, we encode the problem
as a weighted \prob{MaxSAT} instance and invoke a general-purpose
solver~\cite{Z3,DBLP:series/faia/LiM21}.

We validate the formulation by integrating it into a production graph compiler for an AI
accelerator, AWS Trainium~\cite{aws_trainium,neuron_compiler}. The prototype instantiates layout
domains and cost functions using a cycle-based cost model and hardware feasibility checks. We then
evaluate a range of algorithms: local and greedy baselines that mirror common compiler
practice, a manual rule-based strategy, and exact solver-based encodings that compute optimal
solutions, enabling direct measurement of heuristic optimality gaps. Across a diverse suite of
models, we find that (i)~layout selection has a substantial impact on performance, with simple
heuristics degrading execution time by up to $5\times$ compared to a domain-aware strategy,
(ii)~solver-based methods are viable for the majority of models, completing within seconds on most
instances, and (iii)~the benefit of exact solving is limited by cost model fidelity.
Sections~\ref{sect:analysis} and~\ref{sect:dev} take up the third point.

In summary, our main contributions are as follows:

\begin{enumerate}
    \item A unified formulation of automatic layout selection for tensors as a graph
    optimization problem parameterized by layout domains, operator costs, and transpose costs.

    \item Algorithmic results establishing \NP-hardness and inapproximability in general, an
    explicit fixed-parameter algorithm for graphs of bounded treewidth, and a weighted \prob{MaxSAT}
    encoding for general instances.

    \item A framework and evaluation in a production compiler setting,
    comparing heuristic and solver-based approaches for tensor layout selection, together with an
    analysis methodology that uses provably optimal solutions to separate algorithmic quality from
    cost-model fidelity.
\end{enumerate}

\section{A Mathematical Formulation of Layout Selection}
\label{sec:formulation}

This section develops a model of layout selection that formalizes its constraints and
costs.

\paragraph{Tensors, operators, and dataflow graphs.}
A \df{tensor} is a multi-dimensional array, and an \df{operator} is a function on tensors that takes
as input a tuple of tensors, denoted $\In(o)$, and produces output tensors, denoted $\Out(o)$. We
model an ML program as a directed acyclic \df{dataflow graph} $G=(V,E)$, a standard representation in
optimizing compilers~\cite{halide,mirage,tvm18}. Let $\mathcal{T}$ denote the set of all tensors in
$G$. Each vertex $o \in V$ represents an operator, and each directed edge $(o_1,o_2)\in E$ carries a
tensor $t \in \mathcal{T}$ produced by $o_1$ and consumed by $o_2$. We treat $E$ as a multiset and
distinguish repeated uses of the same tensor by argument position. We assume each operator has at
most $\Delta$ inputs, that is, $|\In(o)| \leq \Delta$ for every operator $o$; in practice $\Delta$
is a small constant. We also assume exactly one output, that is, $|\Out(o)| = 1$, which is without
loss of generality: operators with multiple outputs can be handled by introducing auxiliary
single-output operators, each producing one of the outputs.

\paragraph{Tensor layouts.}
A \df{layout} of a tensor specifies how its elements are represented on the target hardware, such as
a permutation of dimensions (e.g., row-major vs.\ column-major), memory placement (e.g., shared vs.\
global memory), or a partition dimension for on-chip buffers. We abstract these details by assuming
that each tensor $t$ has a finite set of admissible layouts $L(t)$. The size of $L(t)$ is determined
by the layout interpretation on the target. For a $k$-dimensional tensor, there are, for example, at
most $k$ choices of a partition dimension and at most $k!$ dimension orderings; in practice the set
is small. For a sequence of tensors $T=\langle t_1,\dots,t_r\rangle$, we write
$L(T)=L(t_1)\times\cdots\times L(t_r)$ for the set of joint layout choices for $T$. A layout
assignment (defined formally below) selects, for every operator $o$ and every tensor $t$ that is an
input or output of $o$ (i.e., $t \in \In(o)\cup\Out(o)$), a layout $\ell(t,o) \in L(t)$ in which $o$
accesses $t$. For a sequence $T=\langle t_1,\dots,t_r\rangle$ of input or output tensors of $o$, we
write $\ell(T,o)$ for the tuple $\bigl(\ell(t_1,o),\dots,\ell(t_r,o)\bigr)$ of layouts selected at
$o$; for example, $\ell(\In(o),o)$ is the tuple of layouts in which $o$ reads its inputs.

\paragraph{Operator costs and feasibility.}
Operators impose constraints and preferences on the layouts of their inputs and outputs. We model
this via a set of non-negative \df{operator cost} functions
\[
\opcost_o :
L(\In(o)) \times L(\Out(o)) \to \mathbb{R}_{\ge 0} \cup \{\infty\},
\]
where $\In(o)$ and $\Out(o)$ denote the input and output tensors of $o$. A cost of $\infty$ denotes
an infeasible or unsupported layout combination on the target hardware. Finite values encode
relative performance differences between supported layout choices.

\paragraph{Layout conversion costs.}
When a tensor is produced in one layout but consumed in one or more different layouts, the compiler
must insert an explicit \df{layout conversion} (a transpose, copy, or reformat). For each tensor
$t$, we model its cost using a \df{transpose cost} function
\[
\tcost_t : L(t) \times L(t) \to \mathbb{R}_{\ge 0},
\]
with $\tcost_t(\ell,\ell)=0$. This cost abstracts the data movement, reformatting overhead, and
associated synchronization required to convert $t$ from one layout to another. When multiple
consumers require the same layout, a single transpose suffices and its cost is incurred once. In
practice, transpose costs can be estimated via a cost model.

\paragraph{The layout selection problem.}
A \df{layout assignment} chooses a layout $\ell(t,o) \in L(t)$ for every operator $o$ and each
tensor $t \in \In(o)\cup\Out(o)$. For each tensor $t$, let $\prd(t)$ denote the layout
$\ell(t,o_{\text{prod}})$ assigned at its producer $o_{\text{prod}}$ and let \( \cns(t) =
\{\,\ell(t,o) \mid o \text{ consumes } t\,\} \) denote the set of distinct layouts required by its
consumers. This formulation captures the tradeoff between choosing layouts that are efficient for
individual operators and avoiding expensive layout conversions.

\begin{problem}[\prob{Layout Selection}]
    Given a dataflow graph $G=(V,E)$, feasible layout sets $L(t)$ for each
    tensor~$t \in \mathcal{T}$, operator and transpose costs, find layouts $\ell(t,o) \in L(t)$ for
    every operator $o$ and tensor $t \in \In(o)\cup\Out(o)$ minimizing
    \[
    \sum_{o\in V}
    \opcost_{o}\!\bigl(\ell(\In(o),o),\,\ell(\Out(o),o)\bigr)
    \;+\;
    \sum_{t\in \mathcal{T}}
    \sum_{\ell \in \cns(t)}
    \tcost_{t}\!\bigl(\prd(t),\,\ell\bigr).
    \]
\end{problem}

The first sum captures operator execution costs. The second captures layout conversion costs: for
each tensor $t$, the compiler converts from $\prd(t)$ to each distinct layout in $\cns(t)$. Since
$\tcost_t(\ell,\ell)=0$, consumers that agree with the producer incur no cost.

The formulation abstracts over the specific interpretation of tensor layouts, accommodating dimension
orderings, partition dimensions, memory placements, and combinations thereof. This generality
enables our theoretical results and algorithms to apply across diverse hardware targets. For
example, our model subsumes VTensor's formulation~\cite{vtensor} as a special case with uniform
operator costs, and provides a formal objective that systems such as SmartMem~\cite{smartmem} and
NeoCPU~\cite{neocpu} heuristically optimize. Next, we describe two concrete instantiations of
\prob{Layout Selection} in existing ML compilers.

\subsection{Instantiation: Layout Assignment in XLA}

Our formulation directly captures the layout assignment problem in XLA and similar GPU-oriented
compilers~\cite{xla,tvm18}. In this context, a tensor layout specifies the \emph{minor-to-major
    ordering} of dimensions in memory. A $k$-dimensional tensor admits up to $k!$ layouts, which are a
subset of permutations on $\langle 1, \dots, k \rangle$, forming $L(t)$. For instance, a
4-dimensional tensor in \texttt{NHWC} format has layout $\langle C, W, H, N \rangle$ while
\texttt{NCHW} corresponds to $\langle W, H, C, N \rangle$.

The operator cost function encodes hardware preferences: convolutions on GPUs typically perform best
with specific layouts (e.g., \texttt{NHWC} for certain cuDNN kernels), while other layouts incur
overhead or may be unsupported (that is, have $\infty$ cost). Matrix multiplications prefer layouts
where the inner (contracted) dimension is contiguous in memory. Layout-agnostic operators like
elementwise operations have constant cost across all layouts. The transpose cost corresponds to
inserting a \emph{copy} operation that physically rearranges tensor data in memory; XLA's layout
assignment pass inserts such copies whenever adjacent operators require incompatible
layouts~\cite{xla_layout_impl}. Our formulation thus provides a precise characterization of the
optimization problem that XLA's greedy heuristic attempts to solve.

\subsection{Instantiation: Partition Dimension Selection in AWS Trainium}

Our formulation also captures layout optimization in compilers for systolic array accelerators such
as AWS Trainium~\cite{FuZFHEHSWD024,neuron_compiler,aws_trainium}; refer to
Appendix~\ref{app:compiler} for details. In these systems, tensors residing in on-chip scratchpad
memory must specify a \emph{partition dimension}, that is, the axis along which the tensor is
distributed across parallel processing elements. For a $k$-dimensional tensor, admissible layouts
are $L(t) = \{1, 2, \ldots, k\}$, indicating which dimension serves as the partition dimension.

The operator cost function encodes hardware constraints imposed by the systolic array architecture.
Matrix multiplication on Trainium requires specific partition dimensions for its operands: computing
$C = A \times B$ requires $A$ to be partitioned along its second dimension and $B$ along its first
dimension. The cost function assigns finite values to valid partition configurations (derived from a
cost model reflecting instruction latency) and $\infty$ to invalid configurations. The transpose
cost corresponds to layout conversions across partitions (typically implemented via
dedicated hardware engines), an overhead that the compiler seeks to minimize.

\subsection{Layout Conversions versus Loop Transformations}
\label{sec:conversions}

On general-purpose hardware, explicit layout conversions are often avoidable: many of the benefits of
a different physical layout can be handled by changing the \emph{iteration order} of the consuming kernel
instead. BLAS interfaces such as \texttt{cublasSgemm} accept transposed operands via \texttt{op(A)}
and \texttt{op(B)} flags, folding the conversion into the kernel itself, and on CPUs a loop
interchange (e.g., the classic $ikj$ matmul ordering) restores unit-stride access to a row-major
operand without ever materializing its transpose. The dependency also runs the other way: physically
reordering data can be what \emph{enables} a loop optimization, since transposing or packing operands
exposes the contiguous access patterns that let compilers vectorize transformer
kernels~\cite{vectorizing-transformers}, and blocked formats serve the same purpose for CNN inference
on CPUs~\cite{neocpu}. Whether the layout or the loop nest should give way depends on the target.

Our cost framework is agnostic to this choice. If a kernel can absorb a layout mismatch by iterating
differently, that combination is simply another admissible configuration of the operator, with a
finite (possibly higher) operator cost and no transpose cost. If it cannot, the mismatch requires an
explicit conversion and the transpose cost is charged.

On the systolic array accelerators we target, a kernel generally \emph{cannot} absorb a mismatch in
the partition dimension. As detailed in Appendix~\ref{app:compiler}, each of the 128 on-chip memory partitions
feeds exactly one row of the systolic array, so the tensor dimension streamed into the array must be
the one laid out across partitions. No reordering of the surrounding loop nest can change which
dimension that is. Switching the partition dimension requires physically moving every element to a
different partition, an operation performed by dedicated transpose hardware or DMA engines and
impossible to fuse into the consuming matmul instruction (in contrast to the
\texttt{op(A)}/\texttt{op(B)} flags above). The same holds for related targets such as TPUs, where
tensors are blocked into fixed hardware tiles~\cite{tpugraphs}. This is why our formulation charges
transpose costs explicitly, while their \emph{magnitude} is target-specific (see
Section~\ref{sec:threats}).

\section{Theoretical Considerations}
\label{sec:theory}

Unless stated otherwise, we consider the general formulation from
Section~\ref{sec:formulation}, with a constant upper bound $\Delta$ on operator arity.

\subsection{Computational Hardness}
\label{sec:hardness}

\begin{theorem}
    \label{thm:np-hard}
    The \prob{Layout Selection} problem is $\NP$-hard, even when every operator has at most three inputs ($\Delta=3$).
\end{theorem}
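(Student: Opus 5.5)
The plan is a polynomial-time reduction from \prob{3-SAT} (equivalently, from a CSP with ternary constraints). Layout choices encode Boolean variables, operator costs encode clauses, and transpose costs keep each variable consistent across its uses. The decision version asks whether an assignment of total cost $0$ exists. Since every cost is non-negative, it suffices to build, for a given formula $\varphi$ with variables $x_1,\dots,x_n$ and clauses $c_1,\dots,c_m$, an instance whose optimum is $0$ if and only if $\varphi$ is satisfiable.

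\emph{Construction.} For each variable $x_i$, introduce a source operator $s_i$ with no inputs producing a tensor $t_i$ with $L(t_i)=\{0,1\}$. Its operator cost is $0$ for both output layouts, so the layout $\prd(t_i)$ encodes the truth value of $x_i$. For each clause $c_j$ over variables $x_a,x_b,x_c$, introduce a clause operator $o_j$ with $\In(o_j)=\langle t_a,t_b,t_c\rangle$, so $\Delta=3$. It produces a fresh output tensor $u_j$ with a single layout. Set $\opcost_{o_j}(\ell_a,\ell_b,\ell_c,\cdot)=0$ if the triple satisfies $c_j$ under the reading ``layout $1$ means true'', and $1$ otherwise. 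One could use $\infty$ instead of $1$, but finite costs keep the statement robust. For every $t_i$ take $\tcost_{t_i}(\ell,\ell')=1$ for $\ell\neq\ell'$ and $0$ otherwise. Clauses with fewer than three literals are padded by repeating a variable, which is allowed since $E$ is a multiset distinguished by argument position. The construction is clearly polynomial in the size of $\varphi$.

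\emph{Correctness.} ($\Rightarrow$) Given a satisfying assignment $\sigma$, set $\prd(t_i)=\sigma(x_i)$ and let every consumer read $t_i$ in that same layout. Then $\cns(t_i)=\{\prd(t_i)\}$, so all transpose costs vanish, and each clause operator sees a satisfying triple, so the total cost is $0$. ($\Leftarrow$) Suppose some assignment has cost $0$. Every consumer layout of $t_i$ must equal $\prd(t_i)$, since otherwise $\tcost_{t_i}(\prd(t_i),\ell)=1$ is charged. Hence all clause operators read the single value $\sigma(x_i):=\prd(t_i)$. Each clause operator has cost $0$, so each clause is satisfied by $\sigma$.

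\emph{Main obstacle.} The one subtle point is the consistency mechanism: a consumer could try to ``cheat'' by reading a variable in a different layout from the one produced. This is exactly why the argument is phrased at threshold $0$, where any cheat is charged positive cost. I would also check that the definition allows operators with zero inputs, since the sources $s_i$ need them. If it does not, each $s_i$ can be fed by a dummy single-layout input tensor produced by one shared dummy source, or we can observe that the model permits inputs of the program as tensors. Either fix keeps every arity at most $3$, completing the proof of Theorem~\ref{thm:np-hard}.
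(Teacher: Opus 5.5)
Your proposal is correct and follows essentially the same route as the paper: a reduction from \prob{3-SAT} with a zero-cost source operator per variable, a three-input clause operator costing $0$ exactly on satisfying input triples and $1$ otherwise, and unit transpose costs, so that the formula is satisfiable if and only if the optimum is $0$. Your extra remarks are harmless refinements of the same argument: padding short clauses by repeating a variable works because mismatched reads are still charged, and zero-input sources are already used in the paper's own construction.
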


\begin{proof}
    We reduce from \prob{3-SAT}~\cite{GJ79}. Given a formula, we build an instance of \prob{Layout Selection} in which variables correspond to tensors and clauses correspond to operators, such that the formula is satisfiable if and only if the minimum cost is~$0$.

    Every tensor has two layouts, $\mathsf{T}$ (``true'') and $\mathsf{F}$ (``false''). For every tensor $t$, set $\tcost_t(\mathsf{T},\mathsf{F}) = 1$ and $\tcost_t(\mathsf{F},\mathsf{T}) = 1$.
    For each Boolean variable $x_j$, create a source operator $s_j$ with no inputs and output tensor $t_j$, with $\opcost_{s_j} = 0$ for both layouts. For each clause $C_i$ over variables $x_{j_1}, x_{j_2}, x_{j_3}$, create an operator $o_i$ whose three inputs are tensors $t_{j_1}, t_{j_2}, t_{j_3}$ and whose output is a fresh tensor used nowhere else. Define $\opcost_{o_i}$ to be $0$ exactly on those input-layout triples that satisfy $C_i$ (where layout $\mathsf{T}$ represents a positive literal and $\mathsf{F}$ a negative literal), and $1$ otherwise; the output layout does not affect cost.

    If the formula is satisfiable, set $\ell(t_j,s_j)$ according to a satisfying assignment; then every clause-operator has zero cost and all consumers of each $t_j$ agree with its producer, so no transpose cost is incurred and the total cost is~$0$. Conversely, if there exist layouts
    $\ell(t,o)$ with total cost~$0$, then no tensor incurs a transpose cost, so all consumers of each $t_j$ use the same layout as its producer, and every operator must have operator cost~$0$,
    which implies the induced truth assignment satisfies every clause.
\end{proof}

Note that the proof above implies that it is $\NP$-hard to distinguish instances of optimum cost $0$
from instances of optimum cost at least $1$. Since any solution with positive cost has an unbounded
ratio relative to an optimum of $0$, the hardness result rules out approximation algorithms with any
finite guarantee.

\begin{corollary}
    \label{cor:no-approx}
    Unless $\P=\NP$, \prob{Layout Selection} admits no polynomial-time approximation algorithm with
    a finite approximation ratio.
\end{corollary}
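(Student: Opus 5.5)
The corollary follows from the gap in the reduction of Theorem~\ref{thm:np-hard}, so the plan is a contradiction argument. Suppose some polynomial-time algorithm $\mathcal{A}$ achieves approximation ratio $\rho<\infty$. This means that for every instance, the cost of the returned assignment is at most $\rho\cdot\OPT$. The ratio $\rho$ may even be a function of the input size, as long as it is finite. We will use $\mathcal{A}$ to decide \prob{3-SAT} in polynomial time, which forces $\P=\NP$.

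The first step is to observe the gap explicitly. Given a 3-CNF formula $\varphi$, build the instance $I_\varphi$ from the proof of Theorem~\ref{thm:np-hard}; the construction clearly takes polynomial time. If $\varphi$ is satisfiable, then $\OPT(I_\varphi)=0$, as shown there. If $\varphi$ is unsatisfiable, then every assignment has positive cost. Moreover, all operator and transpose costs in $I_\varphi$ are integers in $\{0,1\}$, so every finite total cost is a nonnegative integer. Hence an unsatisfiable $\varphi$ gives $\OPT(I_\varphi)\ge 1$. The integrality point is worth stating carefully, although we only need ``positive'' here.

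The second step runs $\mathcal{A}$ on $I_\varphi$ and accepts if and only if the returned cost is $0$. If $\varphi$ is satisfiable, the returned cost is at most $\rho\cdot 0=0$, so we accept. If $\varphi$ is unsatisfiable, every assignment costs at least $1$, so we reject. This decides \prob{3-SAT} in polynomial time.

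The only subtle point is the convention for ratios when $\OPT=0$. A multiplicative guarantee $\mathrm{cost}\le\rho\cdot\OPT$ forces an exact optimum whenever $\OPT=0$. I would state this convention explicitly, since it is what makes ``any finite ratio'' meaningful. I do not expect any real obstacle beyond this.
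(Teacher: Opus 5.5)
Your proposal is correct and follows the paper's own argument: the reduction of Theorem~\ref{thm:np-hard} makes it \NP-hard to distinguish optimum $0$ from optimum at least $1$, so any finite-ratio algorithm would return a zero-cost solution exactly on satisfiable formulas. You also spell out two points the paper leaves implicit, namely the convention $\mathrm{cost}\le\rho\cdot\OPT$ when $\OPT=0$ and that $\rho$ may depend on the input size.
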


Given this negative result, finding algorithms with strong guarantees requires structural or cost
restrictions. One apparent strategy is limiting the operator types. However, as the next theorem
shows, \prob{Layout Selection} remains computationally hard even when all operators are matrix
multiplications with $\Delta=2$. Note that the inapproximability result of
Corollary~\ref{cor:no-approx} does not follow from this restricted setting, since the underlying
\prob{Odd Cycle Transversal} problem admits approximation algorithms~\cite{AgarwalCMM05}; this is
why the two reductions are complementary.

\begin{figure}[!t]
    \centering
    \begin{subfigure}[b]{.42\linewidth}
        \centering
        \includegraphics[page=1]{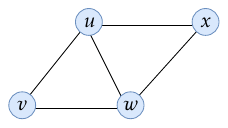}
        \caption{Graph $H=(U,F)$}
    \end{subfigure}
    \hfill
    \begin{subfigure}[b]{.55\linewidth}
        \centering
        \includegraphics[page=2,height=6cm]{algorithms}
        \caption{Constructed \prob{Layout Selection} instance}
    \end{subfigure}
    \caption{Reduction from \prob{Odd Cycle Transversal} to \prob{Layout Selection}
        (Theorem~\ref{thm:matmul-hard}). (a)~An undirected graph $H$ with a triangle
        $\{u,v,w\}$.
        (b)~The corresponding ML program: each vertex becomes a tensor with two layouts
        $\{\mathsf{A},
        \mathsf{B}\}$, and each edge becomes a matmul requiring its inputs in different layouts
        (cost
        $0$) or infeasible (cost $\infty$). The triangle forms an odd cycle of opposite-layout
        constraints that cannot be two-colored with one layout per tensor; at least one tensor must
        be read in both layouts, incurring a transpose cost of~$1$.}
    \Description[A four-vertex graph and the layout-selection program built from it]{Panel~(a)
        draws an undirected graph on vertices $u$, $v$, $w$, $x$ with five edges: $\{u,v\}$,
        $\{v,w\}$, $\{u,w\}$, $\{u,x\}$ and $\{w,x\}$. Vertices $u$, $v$, $w$ form a triangle,
        that is, an odd cycle, and $u$, $w$, $x$ form a second one. Panel~(b) lists the
        constructed program \texttt{func f()}: first one source operator per vertex, producing
        tensors $t_u$, $t_v$, $t_w$, $t_x$, each with layout drawn from $\{\mathsf{A},
        \mathsf{B}\}$; then one matmul per edge, $r_{uv}=\mathrm{matmul}(t_u,t_v)$,
        $r_{vw}=\mathrm{matmul}(t_v,t_w)$, $r_{uw}=\mathrm{matmul}(t_u,t_w)$,
        $r_{ux}=\mathrm{matmul}(t_u,t_x)$ and $r_{wx}=\mathrm{matmul}(t_w,t_x)$, each annotated
        with the edge it encodes. Two trailing comments record the costs: a matmul costs $0$ if
        and only if its two input layouts differ, and the transpose cost between $\mathsf{A}$ and
        $\mathsf{B}$ is $1$ in either direction.}
    \label{fig:reduction}
\end{figure}

\begin{theorem}
    \label{thm:matmul-hard}
    \prob{Layout Selection} is $\NP$-hard even when all non-source operators are matrix
    multiplications over two-dimensional tensors, each tensor admits exactly two layouts, transpose
    costs are binary, and operator costs are $0$ or $\infty$.
\end{theorem}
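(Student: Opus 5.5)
We reduce from \prob{Odd Cycle Transversal} (equivalently, from \prob{Vertex Bipartization}): given an undirected graph $H=(U,F)$ and an integer $k$, decide whether deleting at most $k$ vertices makes $H$ bipartite. This problem is $\NP$-hard~\cite{GJ79}. The construction follows Figure~\ref{fig:reduction}. Each vertex $u\in U$ becomes a source operator $s_u$ of cost $0$ that outputs a two-dimensional tensor $t_u$ with $L(t_u)=\{\mathsf{A},\mathsf{B}\}$. We can think of these as the two partition dimensions of a square matrix, or as row-major versus column-major. The transpose costs are $\tcost_{t_u}(\mathsf{A},\mathsf{B})=\tcost_{t_u}(\mathsf{B},\mathsf{A})=1$. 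Each edge $\{u,v\}\in F$, with $u,v$ in some fixed order, becomes a matmul operator $r_{uv}=\mathrm{matmul}(t_u,t_v)$ whose output tensor is fresh and unused. Its cost is $0$ when the two input layouts differ and $\infty$ otherwise, and the output layout is irrelevant (it is fixed to $\mathsf{A}$, or cost-free). This matches the Trainium instantiation, where $A\times B$ needs $A$ partitioned along its second dimension and $B$ along its first, so the inputs must sit in ``opposite'' layouts. All tensors are square $n\times n$ matrices, so the shapes are consistent. The instance has polynomial size. We claim that the optimum cost equals the minimum size of an odd cycle transversal of $H$; taking threshold $k$ then gives the reduction.

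For the upper bound, let $S$ be an odd cycle transversal with $|S|\le k$, and let $c:U\setminus S\to\{\mathsf{A},\mathsf{B}\}$ be a proper $2$-coloring of $H-S$. For $u\notin S$, set $\ell(t_u,o)=c(u)$ at the producer and at every consumer. For $u\in S$, let the producer use $\mathsf{A}$, and let each consumer $r_{uv}$ read $t_u$ in the layout opposite to the one $r_{uv}$ uses for its other operand. If $v\notin S$, that other operand is $c(v)$. For edges inside $S$, we first fix arbitrary distinct layouts for the two endpoints. Every matmul then sees distinct input layouts and costs $0$. A tensor $t_u$ with $u\notin S$ has $\cns(t_u)=\{c(u)\}$, which equals its produced layout, so it costs $0$. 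A tensor $t_u$ with $u\in S$ has $\cns(t_u)\subseteq\{\mathsf{A},\mathsf{B}\}$, and since the producer uses $\mathsf{A}$, it contributes at most $\tcost(\mathsf{A},\mathsf{B})=1$. The total cost is therefore at most $|S|$.

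For the lower bound, take any finite-cost assignment and let $S$ be the set of vertices $u$ whose tensor incurs a positive transpose cost, that is, $\cns(t_u)\ne\{\prd(t_u)\}$. Because the costs are binary and $|L(t_u)|=2$, each such tensor contributes exactly $1$, so the total cost is $|S|$. For $u\notin S$, every consumer reads $t_u$ in the single layout $\prd(t_u)$. Finite cost forces each matmul $r_{uv}$ with $u,v\notin S$ to read distinct layouts, so $u\mapsto\prd(t_u)$ is a proper $2$-coloring of $H-S$. Hence $S$ is an odd cycle transversal, and the optimum is at least the minimum transversal size.

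The main obstacle is checking the accounting in the lower bound: a vertex whose tensor has no consumers, or whose consumers all agree with the producer, must contribute nothing and still be colorable. This holds by definition of $\cns$, because a vertex outside $S$ has all of its reads equal to $\prd(t_u)$. A second point to verify is that our use of distinct consumer layouts per argument position is allowed; it is, since $E$ is a multiset and $\ell(t,o)$ is chosen per operator. One also has to justify modeling ``different layouts'' as a genuine matmul constraint on two-dimensional tensors, which the Trainium partition-dimension instantiation provides. Together the two bounds give NP-hardness under all the stated restrictions.
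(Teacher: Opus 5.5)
Your proposal is correct and follows the paper's proof: the same reduction from \prob{Odd Cycle Transversal} (one tensor per vertex with layouts $\{\mathsf{A},\mathsf{B}\}$, one $0/\infty$ matmul per edge, unit transposes), with your upper and lower bounds giving a more detailed version of the paper's ``split vertex'' argument. One small fix is needed: a positive transpose cost means $\cns(t_u)\not\subseteq\{\prd(t_u)\}$, not $\cns(t_u)\ne\{\prd(t_u)\}$, because an isolated vertex has $\cns(t_u)=\emptyset$ and costs nothing, and with your gloss it would wrongly be counted in $S$.
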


\begin{proof}
    We reduce from \prob{Odd Cycle Transversal} (deleting the fewest vertices to make a graph
    bipartite). Given an undirected graph $H=(U,F)$, create a tensor $t_u$ for each vertex
    $u\in U$ with exactly two layouts $\{\mathsf{A},\mathsf{B}\}$. For each vertex $u$, create a
    source operator $s_u$ with no inputs and output $t_u$, with $\opcost_{s_u}=0$ for both layouts.
    For each edge $\{u,v\}\in F$, create a matrix-multiplication operator $o_{uv}$ whose inputs are
    $(t_u,t_v)$ and whose output is a fresh tensor used nowhere else. Define $\opcost_{o_{uv}}$ to
    be $0$ when $t_u$ and $t_v$ use different layouts (capturing the two symmetric matmul variants),
    and $\infty$ otherwise. For every tensor $t_u$, set
    $\tcost_{t_u}(\mathsf{A},\mathsf{B})=\tcost_{t_u}(\mathsf{B},\mathsf{A})=1$.
    Figure~\ref{fig:reduction} illustrates the construction.

    Call $u$ \df{split} if its incident matmuls read $t_u$ in both layouts. Since its source layout
    is free, $t_u$ costs $1$ exactly when $u$ is split. If $W$ is the set of split vertices, the
    single layouts of vertices outside $W$ two-color $H-W$. Conversely, any bipartition of $H-W$
    satisfies edges outside $W$, while endpoints in $W$ can use the opposite layout per incident
    edge. Hence the optimum equals the minimum odd-cycle transversal size. Since \prob{Odd Cycle
        Transversal} is $\NP$-hard~\cite{GJ79}, the result follows.
\end{proof}

\subsection{Optimal Algorithms for Layout Selection}
\label{sec:opt}

Hardness in general does not preclude solving the instances that actually arise, and we give two exact
algorithms for \prob{Layout Selection}. The first exploits the
structural property of bounded \emph{treewidth} in real-world dataflow
graphs~\cite{RS1984,Bodlaender93}: when the treewidth is bounded, \prob{Layout Selection} can be
solved in polynomial time via dynamic programming (DP). Prior work suggests that real-world ML
graphs often have small treewidth~\cite{0001PSVW19}; we confirm the observation empirically in
Section~\ref{sec:evaluation}. The second approach encodes the problem as a weighted \prob{MaxSAT}
instance~\cite{DBLP:series/faia/LiM21}, enabling off-the-shelf solvers to find optimal solutions for
general instances.

\subsubsection{Treewidth-based Exact Algorithm}
\label{sec:treewidth}
\leavevmode\par

\paragraph{Treewidth.}
Let $H=(U,F)$ be an undirected graph. A \df{tree decomposition} of $H$ is a pair $(T,\{X_i\})$
consisting of a tree $T$ with node set $V(T)$ and, for each tree node $i \in V(T)$, a set
$X_i \subseteq U$ of graph vertices (called a \df{bag}), such that:
\begin{enumerate}
    \item $\bigcup_{i\in V(T)} X_i = U$;
    \item for every edge $\{u,v\}\in F$ there exists $i$ with $\{u,v\}\subseteq X_i$;
    \item for every $u\in U$, the set $\{ i \mid u\in X_i \}$ induces a connected subtree of $T$.
\end{enumerate}
The \df{width} of the decomposition is $\max_i |X_i|-1$. The \df{treewidth} $\tw(H)$ is the minimum
width over all tree decompositions of $H$. Intuitively, a tree decomposition organizes the graph
into a tree of small separators, where each bag separates the vertices appearing in the subtree below
it from the rest of the graph. A dynamic programming algorithm can therefore sweep over the tree
bottom-up and, at any point, reason only about the small ``frontier'' of vertices in the current
bag.
Such a sweep is simplest when the tree has a restricted form.
A tree decomposition is
\df{nice} if $T$ is rooted at a node~$r$ with $X_r=\emptyset$, and each tree node $i$ has one of the
following four types:
\begin{itemize}
    \item \textbf{Leaf:} no children, $X_i=\emptyset$.
    \item \textbf{Introduce:} one child $j$, $X_i = X_j \cup \{v\}$ for some vertex
    $v \in U \setminus X_j$ (vertex $v$ enters the frontier).
    \item \textbf{Forget:} one child $j$, $X_i = X_j \setminus \{v\}$ for some $v \in X_j$
    (vertex $v$ leaves the frontier, never to reappear above $i$).
    \item \textbf{Join:} two children $j_1,j_2$ with $X_i = X_{j_1}=X_{j_2}$
    (two independently processed subtrees are merged).
\end{itemize}
Any tree decomposition can be transformed into a nice one of the same
width, with $O(\tw\cdot|U|)$ nodes, in linear time~\cite{Kloks94}. Its sole purpose is to simplify the dynamic
program; since each step changes the bag by at most one vertex, the algorithm only needs transition
rules for the four node types above. In a nice tree decomposition, every vertex of the underlying
graph is introduced exactly once and forgotten exactly once. Figure~\ref{fig:treewidth} illustrates
a nice tree decomposition for the example in Figure~\ref{fig:layout-example}. While computing the
treewidth of a graph is itself \NP-hard in general, it is fixed-parameter tractable, and practical
exact and heuristic tools produce decompositions of near-optimal width for graphs of small
width~\cite{Bodlaender93}; we report the treewidth of our benchmark graphs in
Section~\ref{sec:evaluation}.

To simplify the presentation and make the problem amenable to dynamic programming over a tree
decomposition, we reformulate \prob{Layout Selection} as a constrained vertex labeling problem. For
each operator $o\in V$, define its local \df{state space}:
\[
\St(o) \;=\; L\bigl(\In(o)\bigr)\times L\bigl(\Out(o)\bigr).
\]
Since $|\In(o)|\le \Delta$ and each $L(t)$ is finite, we have $|\St(o)|\le M$ for a constant $M$
that depends only on $\Delta$ and the maximum number of admissible layouts per tensor, but is
independent of $|V|$.

A layout assignment is equivalently a labeling $g\colon V\to\biguplus_{o\in V}\St(o)$ with
$g(o)\in \St(o)$ for every $o$. We write $g(o)|_t$ for the component of $g(o)$ that gives the layout
of tensor~$t$: the output layout if $t\in\Out(o)$, or the corresponding input layout if
$t\in\In(o)$. If $o_{\text{prod}}$ is the producing operator of tensor $t$, then the producer layout
is $\prd(t)=g(o_{\text{prod}})|_t$.

\begin{theorem}
    \label{thm:treewidth}
    Given a tree decomposition of width $\tw$ of the underlying undirected dataflow graph,
    \prob{Layout Selection} can be solved optimally in time
    $O\!\left(|V|\cdot \tw^2 \cdot M^{\tw+2}\cdot 4^{M(\tw+1)}\right)$, where $M$ bounds the number of
    layout configurations per operator. In particular, the problem is fixed-parameter tractable in
    $(\tw,M)$, with runtime linear in the number of operators for fixed $\tw$ and $M$.
\end{theorem}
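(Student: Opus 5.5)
The plan is a dynamic program over a nice tree decomposition of the undirected dataflow graph, working with the vertex-labeling reformulation $g(o)\in\St(o)$. The operator term of the objective is local: $\opcost_o$ depends only on $g(o)$, so it can be charged once per operator. The transpose term is the obstacle. For a tensor $t$ produced by $p$, its cost $\sum_{\ell\in\cns(t)}\tcost_t(\prd(t),\ell)$ depends on the \emph{set} of distinct consumer layouts, and the consumers may be scattered across different subtrees. The edge relation itself is local, however: every consumer $c$ of $t$ is adjacent to $p$ in the dataflow graph, so some bag contains both $p$ and $c$. I will therefore keep, alongside the states of the bag vertices, a small summary of which consumer layouts of each bag vertex's output tensor have already been observed.

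\textbf{DP table.} For a tree node $i$, a table entry is indexed by two pieces of data:
\begin{itemize}
  \item a state $\sigma(v)\in\St(v)$ for each $v\in X_i$;
  \item a subset $S_v\subseteq L(\Out(v))$ for each $v\in X_i$, recording the layouts in which already-forgotten consumers of $v$'s output read it.
\end{itemize}
Its value is the minimum, over labelings $g$ of the vertices forgotten in the subtree of $i$ that are consistent with $\sigma$, of the following cost:
\begin{itemize}
  \item the operator costs of all forgotten vertices;
  \item the full transpose cost of every tensor whose producer is already forgotten.
\end{itemize}
Let $c_v$ denote $|L(\Out(v))|\le M$. Since $|X_i|\le\tw+1$, the number of entries is at most $M^{\tw+1}\cdot 2^{M(\tw+1)}$.

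\textbf{Node transitions.} All edge-related accounting happens at the \emph{forget} node of whichever endpoint is forgotten first; at that moment both endpoints are still in the bag.
\begin{itemize}
  \item \textbf{Leaf.} The empty entry has value $0$.
  \item \textbf{Introduce $v$.} Extend each child entry by every $\sigma(v)\in\St(v)$, with $S_v=\emptyset$ and the value unchanged.
  \item \textbf{Forget $v$.} This step does three things.
    \begin{itemize}
      \item For each bag neighbour $u$ whose output tensor $v$ consumes, add $\sigma(v)|_{t_u}$ to $S_u$. This is legitimate because $u$'s producer layout is still available in $\sigma(u)$ when its cost is finally charged.
      \item Form the complete consumer set $\cns(t_v)=S_v\cup\{\sigma(c)|_{t_v} : c\in X_i,\ c \text{ consumes } t_v\}$. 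This set is complete: any consumer not yet forgotten must still lie in the bag, since the edge $(v,c)$ must be covered by a bag in the subtree below, and the connectivity property then keeps $c$ in the bag unless it was forgotten.
      \item Add $\opcost_v(\sigma(v))+\sum_{\ell\in\cns(t_v)}\tcost_{t_v}(\sigma(v)|_{t_v},\ell)$, and minimise over the discarded coordinates.
    \end{itemize}
  \item \textbf{Join.} Combine the two child entries with identical $\sigma$. For each $v$, the combined set is $S_v=S^1_v\cup S^2_v$, and the values add. Disjointness of the two subtrees' forgotten vertices (each vertex is forgotten exactly once) ensures no cost is counted twice. Taking the union, rather than a multiset, implements exactly the ``distinct layouts, each charged once'' rule in the objective.
\end{itemize}
The answer is the value at the root, where the bag is empty.

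\textbf{Correctness and runtime.} Correctness follows by induction on the decomposition, proving the invariant stated above. The two key facts are that each producer–consumer edge is processed exactly once, at the first of its endpoints to be forgotten, and that $\cns(t)$ is complete at the producer's forget. I expect this completeness-and-single-counting argument to be the main delicate step; the bookkeeping of which endpoint records the edge is where errors creep in.

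The join dominates the runtime. Iterating over pairs of subsets per vertex costs $2^{c_v}\cdot 2^{c_v}\le 4^{M}$ for each vertex. Over the bag this gives $4^{M(\tw+1)}$ subset pairs, times $M^{\tw+1}$ states. Forget and introduce nodes incur an extra $O(\tw\cdot M)$ factor for scanning neighbours and minimising over $\St(v)$. A nice decomposition has $O(\tw\cdot|V|)$ nodes. Multiplying these factors gives, generously, $O(|V|\cdot\tw^2\cdot M^{\tw+2}\cdot 4^{M(\tw+1)})$, which is linear in $|V|$ for fixed $(\tw,M)$.
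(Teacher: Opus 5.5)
Your proposal is correct and follows essentially the same route as the paper. Both use a dynamic program over a nice tree decomposition, indexed by bag configurations together with per-producer request sets from already-forgotten consumers. Both register requests and charge the producer's transpose cost at forget nodes (over $S_v$ together with the requests of consumers still in the bag), take point-wise unions at join nodes, and obtain the join-dominated bound $O(|V|\cdot\tw^2\cdot M^{\tw+2}\cdot 4^{M(\tw+1)})$. The paper additionally makes explicit two points you leave implicit: table values range only over labelings that induce exactly the indexed request sets, and the $S_v=\emptyset$ rule at introduce nodes is forced because no consumer of $v$ can be forgotten below the node where $v$ first appears.
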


\begin{figure}[!t]
    \centering
    \begin{subfigure}[b]{.4\linewidth}
        \centering
        \includegraphics[page=3,height=6cm]{algorithms}
        \caption{Dataflow graph}
    \end{subfigure}
    \hfill
    \begin{subfigure}[b]{.55\linewidth}
        \centering
        \includegraphics[page=4,height=7cm]{algorithms}
        \caption{Nice tree decomposition of width $2$}
    \end{subfigure}
    \caption{Treewidth-based layout selection applied to the dataflow graph from
        Figure~\ref{fig:layout-example}. Each node in the tree decomposition is labeled by type
        (leaf, introduce, forget, or join), and annotated with its bag contents. The algorithm
        of Theorem~\ref{thm:treewidth} traverses this decomposition bottom-up: introduce nodes
        record operator configurations, forget nodes charge operator and transpose costs, and the
        join node combines two subtrees.}
    \Description[A four-operator dataflow graph and a nice tree decomposition of it with bag
    contents]{Panel~(a) redraws the dataflow graph of Figure~\ref{fig:layout-example} with
        abbreviated names: input tensors $A$ and $B$ feed the matmul \texttt{mm}, whose result
        feeds the reduction \texttt{red}, whose result feeds \texttt{add}; $B$ also feeds
        \texttt{add} directly by a long edge on the right, and \texttt{add} produces $R$.
        Panel~(b) draws a nice tree decomposition of width~$2$, so every bag holds at most three
        elements. The root is a leaf-labelled bag $\{B\}$; below it an ``introduce \texttt{add}''
        step gives $\{B,\texttt{add}\}$, which is a join node with two children. The left branch
        goes $\{B,\texttt{add}\}$, then ``forget $B$, introduce \texttt{add}'' to $\{R,
        \texttt{add}\}$, then ``forget $R$'' to the leaf $\{\texttt{add}\}$. The right branch goes
        $\{B,\texttt{add}\}$, then ``introduce \texttt{red}'' to $\{B,\texttt{add},\texttt{red}\}$,
        then ``forget \texttt{add}, introduce \texttt{mm}'' to $\{B,\texttt{red},\texttt{mm}\}$,
        then ``forget \texttt{red}, introduce $A$'' to $\{A,B,\texttt{mm}\}$, then ``forget $A$''
        to $\{B,\texttt{mm}\}$, then ``forget \texttt{mm}'' to the leaf $\{B\}$.}
    \label{fig:treewidth}
\end{figure}

\paragraph{The algorithm.}
The algorithm sweeps a nice tree decomposition of the underlying undirected dataflow graph
bottom-up, from the leaves to the root, maintaining a table of partial solutions at every tree node.
A table entry at node $i$ summarizes one way of laying out the part of the graph already processed
(the operators that occur in the subtree below $i$) and is indexed by two pieces of information:
\begin{itemize}
    \item a configuration $f(o)\in\St(o)$ for every operator $o$ in the current bag $X_i$,
    fixing the layouts of all tensors that $o$ touches, and
    \item for every tensor $t$ produced by a bag operator, the set $S_t\subseteq L(t)$ of
    layouts that already-processed consumers of $t$ have requested.
\end{itemize}
The value of the entry is the minimum cost attainable by the processed part of the graph, consistent
with that index. The four node types of the nice decomposition translate into four transition rules.
At a \emph{leaf}, the table is trivial (empty index, cost $0$). At an \emph{introduce} node,
operator $v$ enters the frontier: its candidate configurations are enumerated and recorded in the
index, but no cost is charged yet, and the consumer-request set of its output tensor starts out
empty. At a \emph{forget} node, operator $v$ disappears from the frontier and never reappears above
$i$. Three things
happen: $v$'s operator cost is charged; $v$'s input-layout requirements are registered in the
request sets of its producers that are still in the bag; and the transpose costs of $v$'s output
tensor (once per \emph{distinct} requested layout) are charged, since by now every consumer of that
tensor has either been forgotten (its request is recorded in the request set) or still resides in
the bag (its request is visible in the bag configuration), so the tensor's full set of consumer
layouts is known. Finally, at a \emph{join} node, the tables of two independently processed subtrees
with identical bags are combined: entries with the same bag configuration are merged, and their
request sets are united point-wise, so a layout requested in both subtrees is charged only once. At
the root, the bag is empty and the single remaining entry is the global optimum; the optimal
assignment is recovered by standard backtracking from the root to the leaves.

Correctness follows from the connectedness property of tree decompositions. A producer shares a bag
with each of its consumers, so every consumer is introduced below the forget node. When the transpose
cost is charged, the layouts requested by \emph{all} consumers are therefore known. The runtime is
dominated by join nodes, which combine pairs of request sets. The complete
dynamic program, with formal transition rules and the correctness and runtime analysis, is given in
Appendix~\ref{app:treewidth-proof}.

In Figure~\ref{fig:treewidth}, for instance, the configuration of the matrix multiplication
(\texttt{mm}) is recorded at the introduce node containing it; its operator cost and the transpose
costs of its output tensor are charged at the forget node where it leaves the frontier, by which
point all of its consumers have registered their layout requests; and the join node combines the
independently processed branches of the graph.

\subsubsection{MaxSAT-based Encoding}
\label{sec:maxsat}

Our second exact algorithm encodes \prob{Layout Selection} as a weighted \prob{MaxSAT}
instance~\cite{DBLP:series/faia/LiM21}. Since the runtime and the memory usage of the exact dynamic programming
(Theorem~\ref{thm:treewidth}) scale exponentially in the width, the algorithm quickly exhausts the
memory budget on wider graphs. The \prob{MaxSAT} encoding provides an alternative that does not rely on a tree
decomposition.

\paragraph{Operator variables and clauses.}
For each operator $o \in V$, we introduce a Boolean variable
$x_{o,\vec{\ell}_{\text{in}},\ell_{\text{out}}}$ for each \df{configuration}
$\vec{\ell}_{\text{in}} \in L(\In(o))$ and $\ell_{\text{out}} \in L(\Out(o))$, which is true if and
only if $o$ uses input layouts $\vec{\ell}_{\text{in}}$ and output layout $\ell_{\text{out}}$.

The following hard clauses enforce that each operator selects exactly one feasible configuration.
Every operator must pick at least one configuration, and no two distinct configurations may be
active simultaneously:
\begin{align*}
    \bigvee_{\vec{\ell}_{\text{in}}, \ell_{\text{out}}}
    x_{o,\vec{\ell}_{\text{in}},\ell_{\text{out}}}
    &\quad \forall o \in V, \\
    \neg x_{o,\vec{\ell}_{\text{in}},\ell_{\text{out}}}
    \vee \neg x_{o,\vec{\ell}'_{\text{in}},\ell'_{\text{out}}}
    &\quad \forall o \in V,\;
    (\vec{\ell}_{\text{in}}, \ell_{\text{out}}) \neq (\vec{\ell}'_{\text{in}},
    \ell'_{\text{out}}).
\end{align*}

Infeasible configurations, those with $\opcost_o(\vec{\ell}_{\text{in}},\ell_{\text{out}}) = \infty$,
receive no variable at all, so they cannot be selected; this keeps the encoding polynomial in the
number of feasible configurations and the graph size. Equivalently one could introduce the variable
and exclude it with a unit clause $\neg x_{o,\vec{\ell}_{\text{in}},\ell_{\text{out}}}$.

Soft clauses penalize operator costs. For each $o \in V$ and feasible configuration
$(\vec{\ell}_{\text{in}}, \ell_{\text{out}})$:
\begin{equation*}
    \neg x_{o,\vec{\ell}_{\text{in}},\ell_{\text{out}}}
    \quad \text{with weight } \opcost_o(\vec{\ell}_{\text{in}}, \ell_{\text{out}}).
\end{equation*}

\paragraph{Transpose variables and clauses.}
For each tensor $t \in \mathcal{T}$ and layout $\ell \in L(t)$, we introduce a variable
$y_{t,\ell}$, which is true if some consumer of $t$ requires layout $\ell$.

Hard clauses link operator configurations to required tensor layouts. If operator $o$ consumes
tensor $t$ in layout $\ell$ under configuration $c$:
\begin{equation*}
    \lnot x_{o,c} \vee y_{t,\ell}.
\end{equation*}

Soft clauses penalize transpose costs. For each tensor $t \in \mathcal{T}$ and its unique producing
operator $o_{\text{prod}}$, we add a clause for each configuration $c$ of $o_{\text{prod}}$ that
outputs $t$ in layout $\ell$, and each distinct layout $\ell' \neq \ell$:
\begin{equation*}
    \neg x_{o_{\text{prod}},c} \vee \neg y_{t,\ell'}
    \quad \text{with weight } \tcost_t(\ell, \ell').
\end{equation*}

This clause is violated when tensor $t$ is produced in layout $\ell$ (because its producing operator
$o_{\text{prod}}$ selects configuration $c$) but consumed in layout $\ell'$. The cost is charged
once per distinct consumer layout, regardless of how many operators share that layout. The total
weight of the violated soft clauses therefore equals the objective of \prob{Layout Selection}, so a
weighted \prob{MaxSAT} solver minimizing that weight returns an optimal layout assignment.

\section{Experimental Evaluation}
\label{sec:evaluation}

We evaluate our layout selection model on a specialized AI hardware accelerator, AWS
Trainium~\cite{aws_trainium}. The accelerator uses an established production
compiler~\cite{neuron_compiler} to compile ML models from various frameworks (TensorFlow, PyTorch,
XLA HLO) into optimized code; see Appendix~\ref{app:compiler} for a high-level overview. The
experiments address four research questions:

\begin{enumerate}
    \item[\textbf{RQ1}] (\emph{Practical Impact of Layout Selection}) Is sophisticated layout
    selection beneficial, and how much execution-time performance is sacrificed by adopting a fast
    heuristic over an optimized approach?

    \item[\textbf{RQ2}] (\emph{Solver-based Algorithms in Practice}) Are exact methods
    (\alg{MaxSAT}, \alg{Treewidth}) viable within a production compiler? What is the empirical
    treewidth of real ML dataflow graphs, and at what scale do these methods become impractical?

    \item[\textbf{RQ3}] (\emph{Algorithm Comparison}) How do the studied layout selection
    algorithms compare on formal objective value, on resulting model execution time, and on
    compilation time?

    \item[\textbf{RQ4}] (\emph{Cost Model Fidelity}) How well does the formal objective predict
    actual execution time? In particular, can we reliably rank algorithms based on estimated costs
    alone?

\end{enumerate}

\subsection{Experimental Setup}
\label{sec:experimental_setup}

We consider four different layout selection strategies within our compiler setup:
\begin{description}
    \item[\alg{Local}{\normalfont :}] For each operator independently, select the layout
    configuration with minimum operator cost, ignoring transpose costs entirely. This baseline
    isolates the contribution of operator costs alone.

    \item[\alg{Greedy}{\normalfont :}] First, an initial assignment is constructed by processing
    operators in topological order, selecting for each the configuration minimizing its operator
    cost plus transpose costs to already-assigned predecessors. The assignment is then refined
    iteratively, re-optimizing each operator's layout choice against the total objective (operator
    costs plus transpose costs) with all other assignments held fixed, until no single-operator
    change yields an improvement. This mimics XLA's greedy heuristic~\cite{xla_layout_impl}.

    \item[\alg{Rule-Based}{\normalfont :}] We first group tensor dimensions that refer to the same
    logical dimension across operators. Starting from operators whose preferred layout is known
    (such as matrix multiplications) we propagate layout constraints to neighboring operators. We
    then split the graph into small components, apply domain-specific rules inside each component
    (for example, preferring layouts that avoid known-expensive instruction patterns), and search
    the remaining choices exhaustively. This is the strategy the AWS Neuron Graph Compiler ships
    with~\cite{neuron_compiler}.

    \item[\alg{Solver-Based}{\normalfont :}] We compute an optimal solution minimizing the overall
    costs in two ways: \alg{Treewidth} runs the dynamic program of Section~\ref{sec:treewidth}, and
    \alg{MaxSAT} solves the weighted \prob{MaxSAT} encoding of Section~\ref{sec:maxsat} with an
    off-the-shelf solver. Below we write \prob{MaxSAT} for the problem and \alg{MaxSAT} for the
    strategy that solves our encoding of it.
\end{description}

We evaluate on open-source models from HuggingFace spanning diverse architectural families:
decoder-only LLMs (OLMo, Qwen, Ministral, VaultGemma), encoder-only and encoder-decoder
transformers (BERT, ALBERT, Electra, XLM-RoBERTa, BART, DistilBART), CNN-based vision models (ResNet,
CNN), diffusion models, mixture-of-experts models (OLMoE, FlexOlmo, Granite), multimodal architectures
(Perceiver, Kosmos, PaLiGemma, Qwen3-VL), and audio models (Whisper, Qwen2-Audio, MusicGen), covering both sequence- and spatial-dominated workloads.
By varying model parameters (hidden dimensions, number of
layers) and context lengths, we obtain a benchmark suite of over $200$ model configurations ranging
from $43$ to $10{,}056$ operators. An operator here is a vertex of the dataflow graph, such as
a matmul or a convolution; the compiler later expands one operator into many hardware instructions. We report aggregate statistics across the full suite and present
detailed measurements for a representative subset of $24$ configurations. For each architecture
family, we selected the largest configuration whose compiled execution is
dominated by the model itself rather than by fixed overheads, excluding trivially small variants
(which all algorithms handle equally well) and configurations exceeding the memory budget of our
compilation environment.

We track two primary metrics: \emph{model performance} (end-to-end execution time of a compiled
model processing an inference request in hardware cycles) and \emph{algorithm efficiency}
(wall-clock compilation time and peak memory consumption of the layout selection pass). The
experiments are conducted on an Amazon EC2 \texttt{trn1.32xlarge} instance, which provides 16
Trainium chips and 128 vCPUs of Intel Xeon Platinum 8375C processors (2.90\,GHz, 54\,MB L3 cache);
each compiled model is executed on a single Trainium chip of that instance, and the layout selection
pass runs on the host CPUs.

Two aspects of the setup matter for interpreting the results. First, all four strategies run at the
same (early) position in the Neuron compilation pipeline: operator fusion, tiling, instruction
scheduling, and memory allocation all execute \emph{after} layout selection. The cost model
therefore estimates costs on the pre-fusion operator graph, and the same downstream passes are
applied to the output of every strategy, so the comparison is not confounded by differing fusion
decisions.
Second, \alg{Local}, \alg{Greedy}, and \alg{Solver-Based} use the same cost model: \alg{Local}
minimizes only operator costs, \alg{Greedy} locally improves the full objective, and
\alg{Solver-Based} minimizes it exactly. \alg{Rule-Based} instead follows domain rules, though we
evaluate the objective on its output.

\subsection{Results}

\noindent\textbf{RQ1} (\emph{Practical Impact of Layout Selection}).

To quantify the value of layout selection, we compare the production \alg{Rule-Based} algorithm
against the two baseline approaches, \alg{Local} and \alg{Greedy}. Figure~\ref{rq1a} summarizes the
results across our full benchmark suite by plotting the relative model performance (as a fraction of
the best known result) achieved by each algorithm, with models sorted from best to worst. The
\emph{best known result} for a model is the minimum hardware execution time across all four
evaluated algorithms. A value of $100$ indicates that the algorithm matches the best known
performance for that model; lower values indicate worse model execution time.

\begin{figure}[!tb]
    \centering
    \begin{subfigure}[b]{.49\linewidth}
        \centering
        \includegraphics[width=\textwidth]{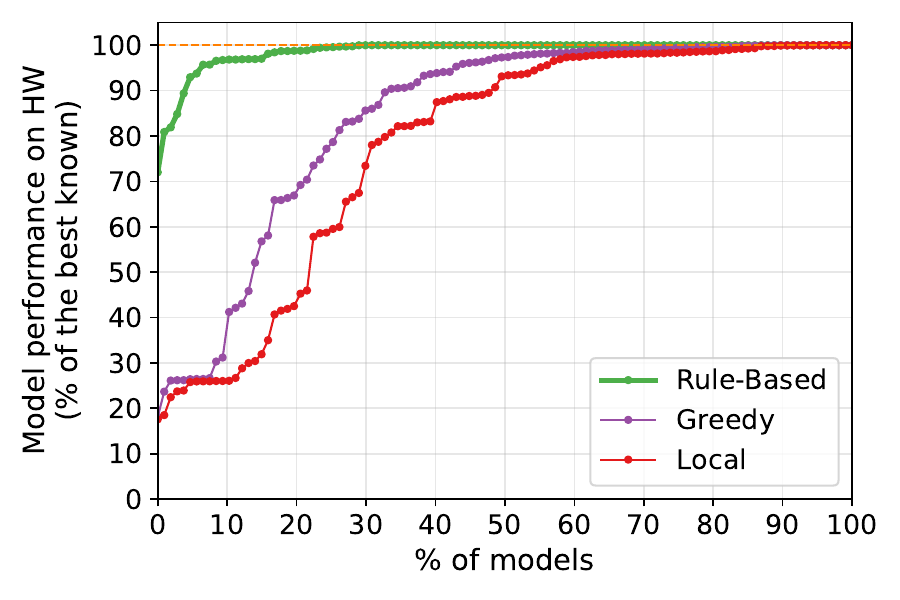}
        \caption{All models}
        \label{rq1a}
    \end{subfigure}
    \hfill
    \begin{subfigure}[b]{.49\linewidth}
        \centering
        \includegraphics[width=\textwidth]{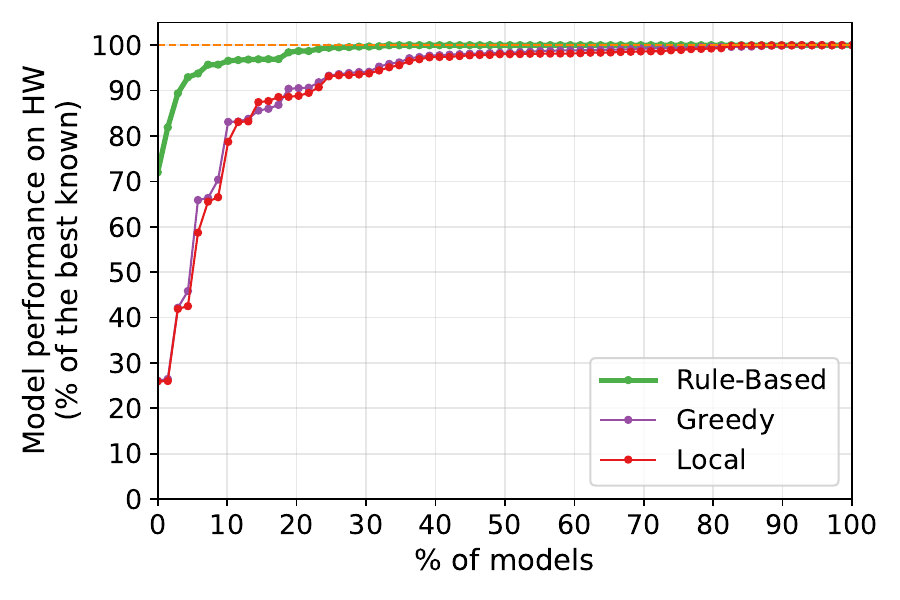}
        \caption{Dense Transformers}
        \label{rq1b}
    \end{subfigure}
    \caption{\textbf{RQ1} (Practical Impact of Layout Selection): Relative model performance (as
        fraction of best known result) on Trainium instances.}
    \Description[Two cumulative-distribution plots of relative model performance, over all models
    and over dense transformers only]{Both panels plot model performance on hardware as a
        percentage of the best known result ($y$, $0$ to $100$) against the percentage of models
        ($x$, $0$ to $100$), as one non-decreasing curve per algorithm: \alg{Rule-Based} in green,
        \alg{Greedy} in purple and \alg{Local} in red, with a dashed horizontal line at $100\%$.
        Higher and earlier curves are better. In panel~(a), over all models, the green
        \alg{Rule-Based} curve starts near $71\%$ and rises steeply, exceeding $95\%$ within the
        first few percent of models and staying essentially flat at $100\%$ thereafter. The purple
        \alg{Greedy} and red \alg{Local} curves start below $20\%$, remain under $30\%$ for the
        first tenth of models, then climb gradually, with \alg{Greedy} consistently above
        \alg{Local} by roughly ten percentage points through the middle of the range; both
        converge to the green curve only past about $60\%$ of models. Panel~(b), restricted to
        dense transformers, shows the same green curve but a much smaller spread: \alg{Greedy} and
        \alg{Local} nearly coincide, rise past $80\%$ within the first tenth of models, and track
        just below \alg{Rule-Based} for the rest of the range.}
    \label{rq1}
\end{figure}

Compared to \alg{Rule-Based}, which is the default strategy in the compiler,
the simpler strategies lose about $20\%$ of performance on average, and in the worst decile they
reach only one fifth of the best known result, corresponding to a $5\times$ slowdown. \alg{Greedy} improves over \alg{Local}, but the difference is
modest. We found that the severity of the regression depends on the model family. For regular dense
transformer architectures (such as BERT, ALBERT, Electra, BART), even the simplest \alg{Local}
baseline, which
ignores transpose costs entirely, stays within $5\%$ of the best result on most such models; see
Figure~\ref{rq1b}. Complex vision models behave differently: \alg{Local} takes 96.0
million cycles on ResNet-50 versus 20.9 for \alg{Rule-Based} (a $4.6\times$ slowdown) and \alg{Greedy}
fares only slightly better at 73.5 million cycles; refer to Table~\ref{tab:results} for the
measurements.

The root cause is an increase in inserted transpose operations: $20\%$ more on average over the
benchmark suite, rising to $2\times$ at the 90th percentile (p90).
The conversions are expensive for the reasons discussed in Section~\ref{sec:conversions}: on
Trainium, a partition-dimension change physically moves every element across on-chip memory partitions through
dedicated hardware engines and, unlike, e.g., the \texttt{op(A)}/\texttt{op(B)} operand flags of
cuBLAS kernels on GPUs, cannot be fused into the consuming operator. In addition to directly
increasing model execution time on hardware, the extra conversions complicate subsequent compiler
passes in the Trainium compilation pipeline. In particular, the tiling pass, which runs after layout
selection, decomposes each operator into hardware instructions by splitting large tensors into tiles
that fit in on-chip memory. Each tile requires a separate DMA transfer between device memory and the
scratchpad buffer, and each transfer incurs a fixed setup overhead. Layout conversions inserted
between operators force the tiler to generate many more small, strided DMA transfers than a
consistent layout would require. This
explains the regression on ResNet-50: \alg{Local} produces $24\times$ higher estimated DMA
cost after tiling compared to \alg{Rule-Based}.

\medskip
\noindent\textbf{RQ2} (\emph{Solver-based Algorithms in Practice}).

The treewidth-based algorithm runs in time polynomial in graph size but exponential in treewidth;
see Theorem~\ref{thm:treewidth}. To assess its practicality, we first estimate the treewidth of
dataflow graphs extracted from models in our benchmark suite. Figure~\ref{tw1} reports the results:
while $92\%$ of instances have treewidth not exceeding $4$, occasionally we encounter an
instance with larger treewidth (the maximum is $8$ in our evaluation). This observation aligns with
prior observations that ML dataflow graphs have bounded treewidth~\cite{0001PSVW19}. However, the
dynamic program of Theorem~\ref{thm:treewidth} can build tables too large to fit in memory, making
the approach impractical for large instances. This is
confirmed by Figure~\ref{tw2}, where we plot the runtime (in log scale) needed to solve a given
fraction of benchmark instances. The dynamic programming finds an optimal solution to \prob{Layout Selection} within a few seconds for graphs with treewidth at most $4$, covering the same
$92\%$ of our benchmark suite. However, larger instances require substantially more memory (over
$10$\,GB for treewidth-$5$ graphs and over $64$\,GB for larger ones), which exceeds the memory
budget of a typical compilation environment. Unlike \alg{MaxSAT}, the DP algorithm is
all-or-nothing, in that it either completes with an optimal solution or produces no result at all.

\begin{figure}[!tb]
    \centering
    \begin{subfigure}[b]{.5\linewidth}
        \centering
        \includegraphics[width=\textwidth]{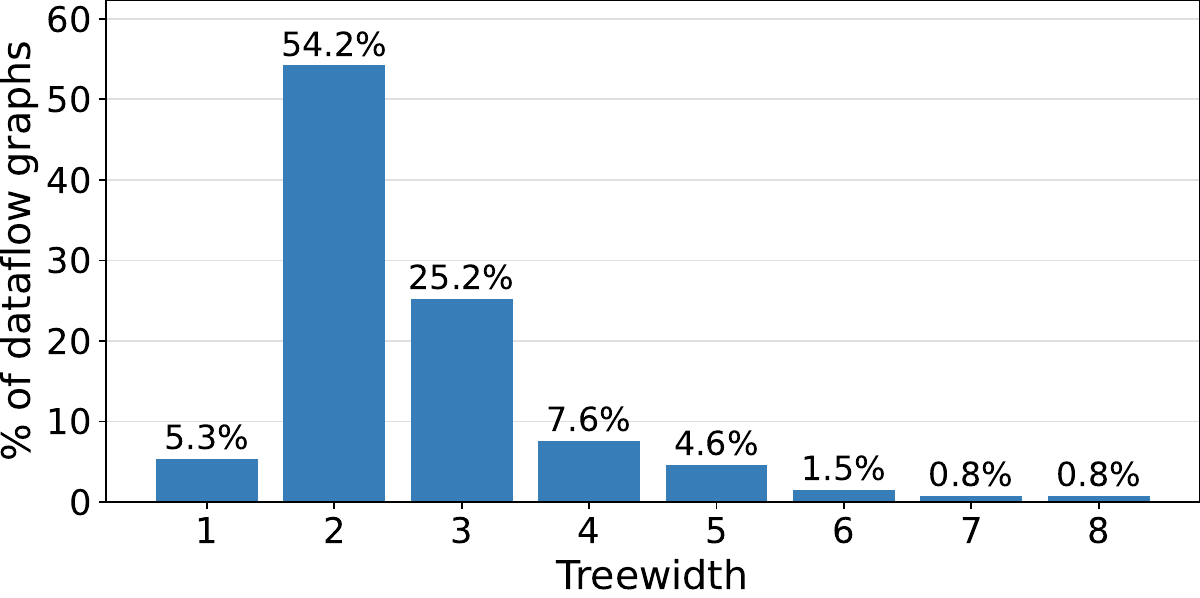}
        \caption{Treewidth of dataflow graphs}
        \label{tw1}
    \end{subfigure}
    \hfill
    \begin{subfigure}[b]{.43\linewidth}
        \centering
        \includegraphics[width=\textwidth]{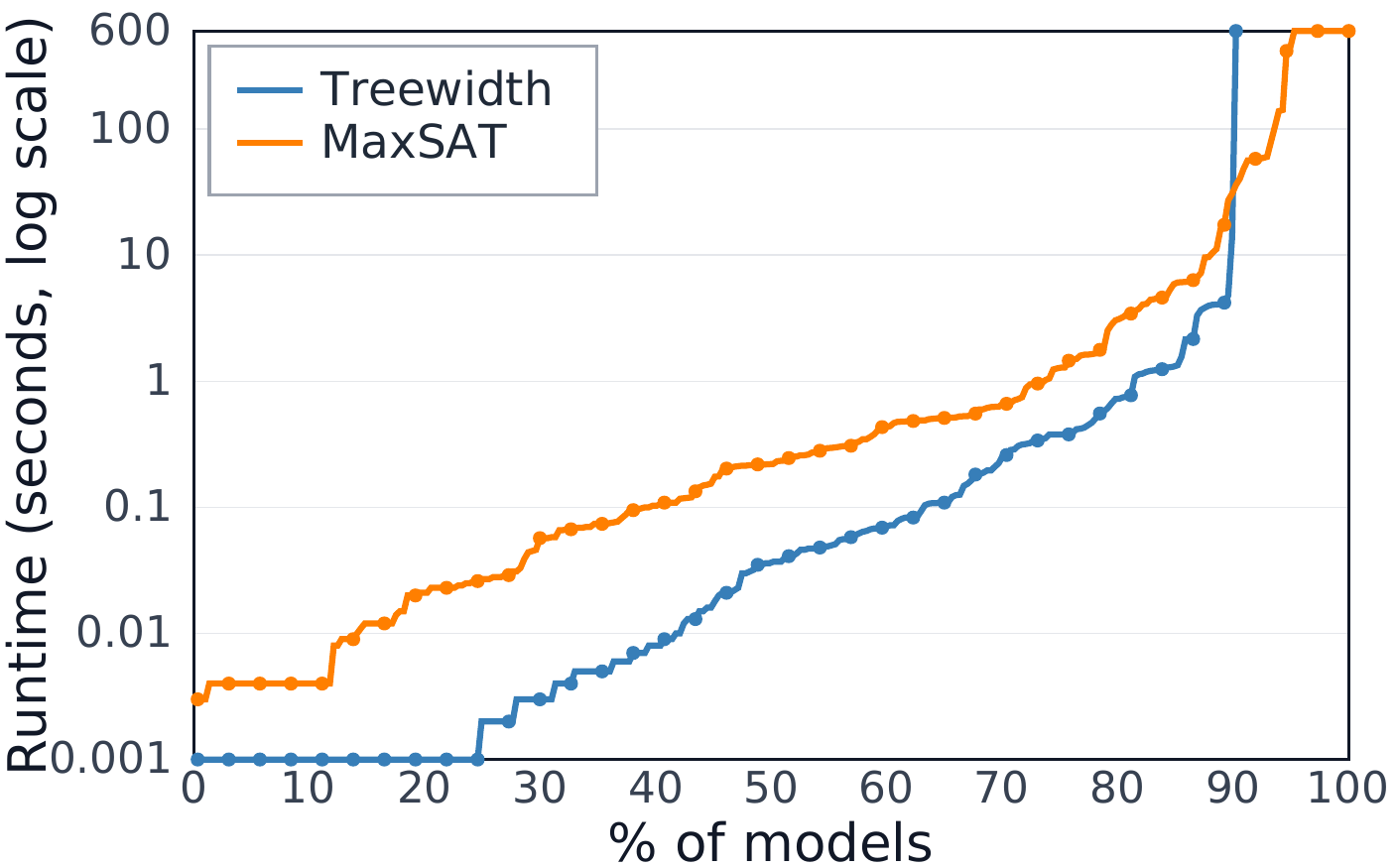}
        \caption{Runtime of \alg{Treewidth} and \alg{MaxSAT}}
        \label{tw2}
    \end{subfigure}
    \caption{\textbf{RQ2} (Solver-based Algorithms in Practice): Most real-world dataflow graphs
        exhibit low treewidth, yet \alg{Treewidth} exceeds the memory budget on $8\%$ of
        instances, whereas \alg{MaxSAT} remains practical.}
    \Description[A bar chart of the treewidth distribution and a cumulative runtime plot for the
    two exact algorithms]{Panel~(a) is a bar chart of the percentage of dataflow graphs
        ($y$, $0$ to $60$) by treewidth ($x$, $1$ to $8$), with each bar labelled: $5.3\%$ at
        width~$1$, $54.2\%$ at~$2$, $25.2\%$ at~$3$, $7.6\%$ at~$4$, $4.6\%$ at~$5$, $1.5\%$
        at~$6$, and $0.8\%$ at each of~$7$ and~$8$; the distribution is thus concentrated on
        widths $2$ and~$3$, which together account for about four fifths of the graphs. Panel~(b)
        plots runtime in seconds on a logarithmic $y$~axis from $0.001$ to $600$ against the
        percentage of models ($x$, $0$ to $100$), with one non-decreasing curve for
        \alg{Treewidth} in blue and one for \alg{MaxSAT} in orange. The blue \alg{Treewidth} curve
        sits roughly one order of magnitude below the orange \alg{MaxSAT} curve across the whole
        middle of the range, staying under a tenth of a second for about $70\%$ of models, but
        then jumps vertically to the $600$-second ceiling at about $90\%$ of models, marking the
        instances that exhaust the memory budget. The orange \alg{MaxSAT} curve rises more
        smoothly, stays below ten seconds for about $90\%$ of models, and reaches the same ceiling
        only in the last few percent.}
    \label{tw}
\end{figure}

Figure~\ref{tw2} also illustrates runtimes of \alg{MaxSAT} for the considered models. We solve the
\alg{MaxSAT} encoding using Z3's optimization module~\cite{Z3} with a 10-minute timeout. For
transformer models, \alg{MaxSAT} is viable. It completes within seconds on most configurations, at an
additional compilation overhead under $10$ seconds, usually less than 5\% of the total model
compilation time. In the rare cases when the solver times out (observed only on the
large models, e.g., Vision Perceiver with $3{,}205$ operators), the best feasible solution found
within the first $10$ seconds is not improved during the remaining time. The solver spends the bulk
of its budget proving optimality rather than finding better assignments, which suggests the partial
solutions returned at timeout are close to optimal, though we cannot bound the gap. We therefore use \alg{MaxSAT} as the primary
exact method in the following sections.

The usefulness of an exact solver rests on the cost model being accurate enough that improvements in
the formal objective show up as improvements on hardware. Where that fails, provably optimal
solutions to \prob{Layout Selection} can underperform simpler strategies, the gap we examine in
\textbf{RQ3} and \textbf{RQ4}.

\definecolor{collocal}{rgb}{0.894, 0.102, 0.110}
\definecolor{colgreedy}{rgb}{0.596, 0.306, 0.639}
\definecolor{colrule}{rgb}{0.302, 0.686, 0.290}
\definecolor{colsolver}{rgb}{0.216, 0.494, 0.722}

\begin{table}[!t]
    \centering
    \small
    \caption{Model execution time (millions of cycles). Lower is better. The best result per model,
        and every result within $1\%$ of it, is bold.}
    \label{tab:results}

    \begin{tabular*}{\textwidth}{@{\extracolsep{\fill}} l l r r r r}
        \toprule
        Type & Model & \colorbox{collocal!50}{\alg{Local}} &
        \colorbox{colgreedy!50}{\alg{Greedy}} &
        \colorbox{colrule!50}{\alg{Rule-Based}} &
        \colorbox{colsolver!50}{\alg{Solver-Based}}
        \\
        \midrule

        \multirow{12}{*}{\rotatebox[origin=c]{90}{Dense Transformers}}
        & ALBERT-large        &   53.4 &   45.8 & \textbf{ 44.7} &   45.3 \\
        & BERT-base           &   19.2 &   19.3 & \textbf{ 18.5} & \textbf{ 18.5} \\
        & BERT-large          &   29.0 &   29.6 & \textbf{ 28.7} & \textbf{ 28.7} \\
        & BART-base           & \textbf{  2.4} &    2.6 & \textbf{  2.4} & \textbf{  2.4} \\
        & DistilBART          &    2.2 &    2.3 &    2.2 & \textbf{  2.1} \\
        & Electra-large       & \textbf{ 23.2} &   23.7 & \textbf{ 23.0} & \textbf{ 23.0} \\
        & Ministral-4B        &    3.5 &    3.5 & \textbf{  3.3} &    3.5 \\
        & OLMo-7B             &   51.1 &   51.4 & \textbf{ 50.1} &   51.6 \\
        & VaultGemma-1B       &   13.8 & \textbf{ 10.2} & \textbf{ 10.1} & \textbf{ 10.1} \\
        & Whisper enc         & \textbf{225} & \textbf{226} &  258 & \textbf{225} \\
        & Whisper dec         &    7.2 & \textbf{  7.1} &    7.2 &    7.2 \\
        & XLM-RoBERTa         &    3.1 &    3.1 & \textbf{  2.9} & \textbf{  2.9} \\

        \midrule

        \multirow[c]{12}{*}{\rotatebox[origin=c]{90}{
                \begin{tabular}{c}
                    Vision,\\
                    Multimodal, MoE
        \end{tabular}}}
        & CNN                 &    1.8 &    1.1 & \textbf{  0.9} &    1.1 \\
        & FlexOlmo-7x7B       &  146 & \textbf{138} & \textbf{139} &  154 \\
        & Granite-3.1-3B      & \textbf{184} &  188 &  198 & \textbf{185} \\
        & Kosmos-2.5          &   17.7 &   23.0 &   19.2 & \textbf{ 17.3} \\
        & Language Perceiver  &  462 &  480 & \textbf{346} &  516 \\
        & MusicGen            &   44.3 & \textbf{ 27.0} & \textbf{ 27.0} &   38.5 \\
        & OLMoE-1B-7B         &  100 &  109 & \textbf{ 96.5} & \textbf{ 96.9} \\
        & PaLiGemma-3B        & \textbf{ 10.8} &   11.0 &   11.7 &   13.3 \\
        & Qwen2-Audio         &    9.9 &    9.9 & \textbf{  9.6} &    9.9 \\
        & Qwen3-VL            &   34.9 &   31.6 & \textbf{ 30.7} &   35.9 \\
        & ResNet-50           &   96.0 &   73.5 & \textbf{ 20.9} &   63.9 \\
        & Vision Perceiver    &   14.1 &   14.8 & \textbf{ 10.2} &   11.3 \\

        \bottomrule
    \end{tabular*}
\end{table}

\medskip
\noindent\textbf{RQ3} (\emph{Algorithm Comparison}).

We first compare the algorithms with respect to the formal objective used in layout selection.
Figure~\ref{fig:gap} reports the optimality gap of each algorithm relative to the
\alg{Solver-Based} optimum ($0\%$ = optimal). \alg{Greedy} produces solutions
typically within $1\%$ of optimal, while
the \alg{Rule-Based} heuristic is within $3\%$ on average ($5\%$ at p90).
Despite these small objective differences, the resulting hardware performance can differ
significantly.

Table~\ref{tab:results} summarizes execution cycles for each algorithm across a representative
subset of our benchmark suite. We categorize models into two groups. \emph{Dense transformers}
(BERT, ALBERT, Electra, BART, OLMo, Whisper, etc.) have regular data flow where layout preferences
are mostly consistent across operators, resulting in few layout conflicts. \emph{Complex
    architectures} (vision models, multimodal, mixture-of-experts) contain heterogeneous operators
(convolutions, cross-attention, modality encoders, expert routing) along with skip connections and
varying tensor shapes, creating many layout conflicts between adjacent operators. We note that the
measurement noise (execution time variation across repeated runs) is below
$0.5\%$ in our experiments; the differences discussed below all exceed this margin.

For dense transformers, the four algorithms perform comparably. \alg{Local} and \alg{Solver-Based}
frequently match \alg{Rule-Based}: all three tie for the best execution time on Electra-large and
BART-base, and \alg{Solver-Based} also matches it on BERT-base and BERT-large. This confirms the
finding from \textbf{RQ1} that for regular architectures with uniform layout preferences, even
ignoring transpose costs often yields
near-optimal results. \alg{Rule-Based} is $15\%$ slower only on Whisper enc (258 vs.\ 225);
elsewhere it wins by small margins (e.g., Ministral-4B, OLMo-7B), likely due to domain-specific
heuristics tuned for these decoder-only architectures.

Complex architectures behave differently. \alg{Rule-Based} is the winner on most models,
sometimes by a wide margin: ResNet-50 (20.9 vs.\ 96.0 for \alg{Local}), MusicGen
(27.0 vs.\ 44.3), Vision Perceiver (10.2 vs.\ 14.1), and Qwen3-VL (30.7 vs.\ 34.9). \alg{Greedy}
often improves over \alg{Local} but still falls far short of \alg{Rule-Based}. More surprisingly,
\alg{Solver-Based}, which computes provably optimal solutions under the cost model, sometimes
performs \emph{worse} than simple heuristics, as on ResNet-50 (63.9 vs.\ 20.9 for \alg{Rule-Based}),
Language Perceiver (516 vs.\ 346), and PaLiGemma-3B (13.3 vs.\ 10.8 for \alg{Local}). Minimizing the
formal objective evidently does not guarantee good hardware performance when the cost model is
inaccurate.

\alg{Rule-Based} produces the best results on complex architectures and competitive
results on dense transformers. The product of multi-year development, it encodes domain-specific
knowledge (e.g., preferred partition dimensions for convolutions) that the cost model does not
capture; Section~\ref{sect:analysis} dissects why such a heuristic can beat a provably optimal
algorithm.

\medskip
\noindent\textbf{RQ4} (\emph{Cost Model Fidelity}).

\begin{figure}[!tb]
    \centering
    \begin{subfigure}[b]{.5\linewidth}
        \centering
        \includegraphics[width=\textwidth]{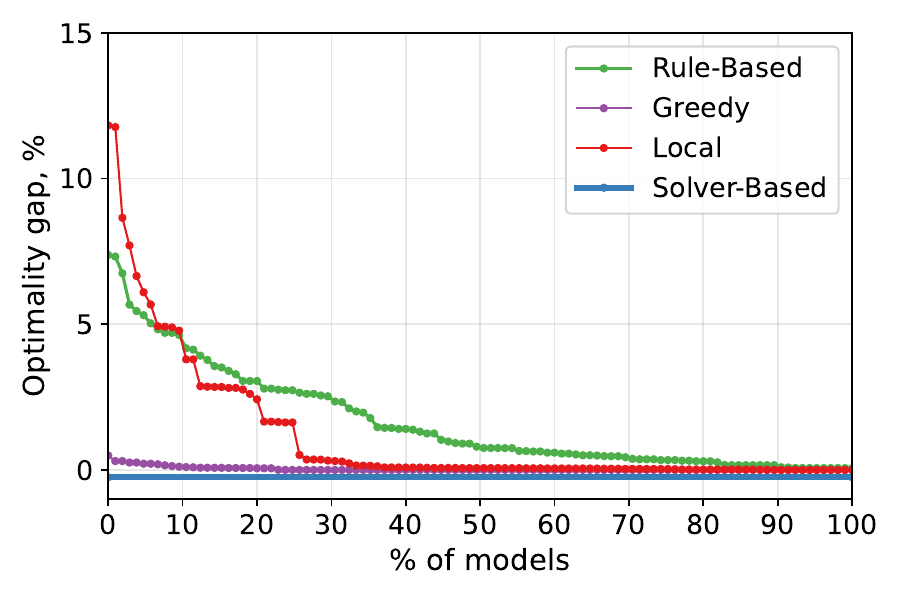}
        \caption{Optimality gap relative to the \alg{Solver-Based} optimum ($0\%$ = optimal)
            under the cost model}
        \label{fig:gap}
    \end{subfigure}
    \hfill
    \begin{subfigure}[b]{.4\linewidth}
        \centering
        \includegraphics[width=\textwidth]{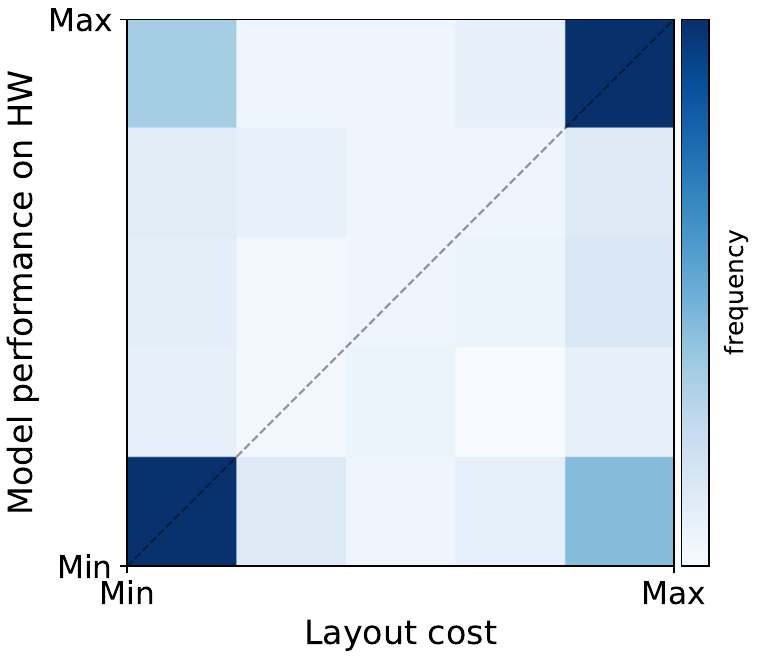}
        \caption{Agreement between cost-model predictions and hardware execution time}
        \label{fig:fidelity}
    \end{subfigure}
    \caption{\textbf{RQ4} (Cost Model Fidelity): Algorithms achieve near-optimal
        objective values under the cost model,
        yet predicted costs can diverge from actual hardware execution time.}
    \Description[A cumulative plot of optimality gaps and a heatmap of predicted cost against
    measured performance]{Panel~(a) plots the optimality gap in percent ($y$, $0$ to $15$)
        against the percentage of models ($x$, $0$ to $100$), as one non-increasing curve per
        algorithm: \alg{Solver-Based} in blue lies flat on $0\%$ by definition; \alg{Greedy} in
        purple starts at about $0.5\%$ and is indistinguishable from zero past the first tenth of
        models; \alg{Local} in red starts near $12\%$ and falls below $1\%$ by about a quarter of
        models; \alg{Rule-Based} in green starts near $7\%$ and decays most slowly of all, staying
        above the other heuristics from roughly $30\%$ of models onward. All curves are therefore
        within a few percent of the optimum over most of the suite. Panel~(b) is a
        $5\times 5$ heatmap whose $x$~axis is the normalized predicted layout cost from Min to Max
        and whose $y$~axis is the normalized measured model performance on hardware from Min to
        Max, with cell shading giving the frequency of algorithm-model points and a dashed
        diagonal marking perfect agreement. Two dark cells sit on the diagonal, at the
        lowest-cost/best-performance corner and the highest-cost/worst-performance corner, but two
        further clusters of mass sit off it: a moderately dark cell at lowest predicted cost with
        worst measured performance, and a shaded cell at highest predicted cost with best measured
        performance. The interior cells are all pale, so the distribution is concentrated in the
        corners rather than along the diagonal.}
    \label{fig:rq4}
\end{figure}

To evaluate cost model accuracy, we use a pairwise comparison methodology. For each benchmark model
and each pair of layout selection algorithms, we compare the predicted layout cost (operator costs
plus transpose costs) and check whether the algorithm with the lower predicted cost also achieves
lower hardware execution time. We define accuracy as the fraction of comparisons where the cost
model's ranking agrees with the hardware outcome. This metric is related to Kendall's rank
correlation coefficient~\cite{kendall1938}, but applied to algorithm pairs rather than full
rankings. To avoid disagreements caused by measurement noise, we treat two values as equal when
their relative difference is below $0.5\%$.

In our evaluation, the cost model correctly ranks approximately $87\%$ of pairwise comparisons;
in contrast, a random predictor, which would guess uniformly among the three possible outcomes
(``better,'' ``worse,'' or ``equal''), achieves $33\%$ accuracy.

Figure~\ref{fig:fidelity} provides a complementary view of cost model fidelity. Each point
represents one algorithm applied to one benchmark model. The $x$-coordinate shows the predicted
layout cost normalized to $[0,1]$ across all algorithms for that model (0 = lowest cost, 1 =
highest), and the $y$-coordinate shows the hardware execution time normalized in the same way. A
perfectly accurate cost model would place all points on the diagonal, meaning the algorithm with the
lowest predicted cost also achieves the lowest execution time. To visualize the distribution, we
aggregate the points into a $5 \times 5$ frequency histogram. The off-diagonal mass indicates
disagreement between predicted costs and hardware performance.

This inaccuracy primarily originates from how the cost model handles DMA transfer costs between
device memory and the on-chip scratchpad (\textbf{RQ1}). While the cost
model captures operator execution costs reasonably well,
its transpose cost estimates fail to account for these downstream memory access patterns.

\subsection{Analysis: Why Does a Provably Optimal Algorithm Underperform?}
\label{sect:analysis}

We now discuss why \alg{Rule-Based} can beat an algorithm with an optimality guarantee, and derive a
criterion for when exact solving pays off.

\paragraph{What the solver measures.}
For any algorithm, the hardware execution time is the objective value of its solution plus the error
of the cost model on that solution. \alg{Solver-Based} minimizes the first term exactly, so whenever
it loses on hardware, the loss comes entirely from the second term. A losing solver run therefore
isolates the error of the cost model. A heuristic's losses mix both terms and cannot be read this
way.

\paragraph{Why the error grows under optimization.}
Where the cost model is accurate, the guarantee carries over to hardware: \alg{Solver-Based}
matches or beats every other algorithm on $8$ of the $12$ dense transformers in
Table~\ref{tab:results}. Where it is not, small model errors turn into large regressions, for two
reasons. First, the solver searches the whole
feasible space and selects the configuration with the lowest predicted cost, which favors
configurations whose cost is underestimated. It therefore does not see the average error of the model
but its largest error. Heuristics search a much smaller set of conventional layouts and are exposed
only to the average error. This also explains why \alg{Greedy}, whose objective value is worse than
the solver's, can be faster on hardware (MusicGen: $27.0$ versus $38.5$ million cycles): it starts
from a topological order and changes one operator at a time, so it stays close to conventional
layouts. Second, layout selection runs before tiling, scheduling, and memory allocation, so an
underestimated layout also changes how the tiler splits operators, and the objective does not account
for the resulting DMA transfers (\textbf{RQ1}).

\paragraph{A criterion for exact solving.}
All evaluated algorithms are within a few percent of the objective optimum, while their hardware
execution times differ by up to $4.6\times$ (\textbf{RQ3}). Optimizing the objective more precisely
therefore pays off only while the optimality gap of the heuristic is larger than the error of the
cost model on the workloads of interest. For dense transformers this holds and the solver is the
better strategy; for vision and multimodal models it does not, and no algorithmic improvement is
visible behind the model error. The limit is not specific to exact solvers: any strategy driven by
the formal objective inherits the same cost model, so on the vision and multimodal workloads a
better search algorithm would not help either.

\subsection{The Role of Cost Models}
\label{sect:cost}

Analytical cost models are standard practice in production compilers: XLA and TVM use them for
fusion and scheduling decisions~\cite{xla,tvm18}, Cadence's Xtensa Neural Network Compiler drives its
operator fusion with an analytical profitability model~\cite{xnnc}, and the production Trainium
compiler in our study is built around a cycle-based analytical model. At the same time, accuracy
varies between models: a comparison of an academic analytical model against the industrial XNNC model
reports substantial differences on the same kernels~\cite{droplet}, which matches our observation
that the same model can be reliable for one class of decisions and unreliable for another. Our
evaluation uses the most accurate model available for this hardware target, the production compiler's
own cycle-based model, so the gaps observed in \textbf{RQ4} are not an artifact of a simplified
academic model. \textbf{RQ3} and \textbf{RQ4} thus leave open how a cost model can be made accurate
enough to optimize against. Two established options are to replace the analytical model with a
\emph{learned predictor}, or to sidestep explicit models via \emph{autotuning} or
\emph{profile-guided optimization}.

Learned cost models have shown promising results in TPU performance prediction~\cite{tpugraphs},
throughput estimation~\cite{MendisRAC19}, and Halide scheduling~\cite{halide}, continuing a long
line of work that uses program features and machine learning to steer compiler
decisions~\cite{cummins-stencil,CumminsP0L17,LeatherC20}. In principle, they can capture complex
effects, such as the impact of layout on memory behavior, without explicit modeling. In practice,
they require large amounts of hardware-specific training data, which must be regenerated as
compilers and architectures evolve~\cite{AshouriKCPS19}. They are also hard to debug, since fixing a
wrong prediction often means retraining without clear insight into the cause~\cite{LeatherC20}, and they
transfer poorly across platforms~\cite{CumminsP0L17,autotvm}.

Autotuning and profile-guided approaches measure actual
performance instead~\cite{tvm18,ansor,AshouriKCPS19}. This works when decisions can be evaluated in
isolation, but layout selection does not fit that model, because its decisions propagate through
later passes. Attributing a performance change to an individual layout choice is therefore
unreliable, and each measurement requires compiling and profiling the full model, an expense that
profile-reuse techniques can amortize but not eliminate~\cite{AyupovPP24}.

A third option is to refine the analytical model in place. Small
amounts of profiling data can calibrate individual components of a cost model, such as transpose and
DMA costs, while the overall structure stays intact~\cite{DJB09,NewellP20}. Section~\ref{sect:analysis}
shows where such calibration should aim, since the solver identifies exactly those configurations on
which predicted and measured rankings disagree, and the dominant error in our setting, the DMA cost
of transposes after tiling, is an isolated term that a small profile set can fit. Calibrating a few
analytical terms this way is a bounded and repeatable task, in contrast to the years of manual tuning
behind \alg{Rule-Based}. It also pays off more broadly: every strategy built on the cost model
benefits from a more accurate one, while hand-written rules must be re-tuned for each new model
family and hardware generation.

\subsection{Practical Suggestions for Compiler Developers}
\label{sect:dev}

Our results have several practical consequences.

Layout selection should be written down as an optimization problem even when it is solved
heuristically. A formal objective gives a reference point for debugging and for comparing strategies,
and it makes external solvers usable; without one, layout logic tends to grow into an opaque
collection of rules. But the objective is only as good as the cost model instantiating it. Since even
a provably optimal solution can lose on hardware (\textbf{RQ3}), once the heuristic's optimality gap
is smaller than the model's error, further work is better spent on cost-model fidelity than on the
search algorithm (Section~\ref{sect:analysis}).

How much machinery a layout pass needs depends on the workload. For transformer architectures with
regular data flow, layout preferences are largely uniform and local decisions rarely cascade, so even
\alg{Local} performs well; for vision models with complex spatial data flow the gap between
\alg{Local} and \alg{Rule-Based} is large (Table~\ref{tab:results}). A simple greedy pass may
therefore suffice for transformer-dominated workloads, while vision and multimodal pipelines warrant
domain-specific heuristics.

Finally, an exact solver is useful in the compiler's testing infrastructure even where it is
not the production strategy. It quantifies the optimality gap of the shipping heuristic, supplies
reference layouts for tuning new heuristics, and catches regressions when heuristic cost is compared
against the optimum on small benchmarks. On workloads where the cost model holds up, such as
transformer models, it can serve as the primary strategy as well, though its optimality is always
with respect to the cost model and should be checked against measured execution time first.

\section{Threats to Validity}
\label{sec:threats}

Even though our evaluation used a specific end-to-end compilation stack, the layout constraints we
optimize are shared among modern architectures. Several boundaries nonetheless limit how far the
results carry.

\paragraph{Internal validity.}
Our framework requires a cost model that maps layout configurations to execution costs. In practice,
such models are always approximate. At the point where layout selection occurs in the compilation
pipeline, downstream decisions such as memory allocation and instruction scheduling have not yet
been made, so their effects have to be estimated. As seen in \textbf{RQ4}, provably optimal
solutions under an imperfect cost model may not translate to the best wall-clock performance.
Closing this gap requires sustained investment in cost model accuracy, as we discuss in
Section~\ref{sect:cost}.

\paragraph{External validity.}
Our evaluation is conducted on a single hardware target (AWS Trainium) using one production compiler
(Neuron Graph Compiler). The specific cost tradeoffs, e.g., which operator configurations are cheap
and how expensive transposes are, may differ on other accelerators (e.g., GPUs or TPUs, mobile
NPUs). In particular, as discussed in Section~\ref{sec:conversions}, on CPUs and GPUs a layout
mismatch can often be absorbed by the consuming kernel through loop transformations such as
interchange or through transposed-operand interfaces, making explicit conversions cheaper or
unnecessary; on partitioned systolic array targets this escape hatch does not exist. While our
formulation is hardware-agnostic, the relative performance of the studied algorithms may not fully
generalize to other targets. Similarly, our benchmark suite, though diverse, consists of open-source
models from HuggingFace; proprietary or domain-specific models may exhibit different dataflow graph
structures and layout characteristics.

\paragraph{Construct validity.}
Our formulation captures layout selection in isolation, but production compilers often make layout,
tiling, and operator fusion decisions jointly. Interactions among these passes can shift the cost
landscape: fusing two operators may eliminate an intermediate tensor and its associated transpose,
changing the optimal layout assignment. In the compiler we study, fusion and tiling run \emph{after}
layout selection, so all strategies are evaluated against the same downstream pipeline and the cost
model targets the pre-fusion graph (Section~\ref{sec:experimental_setup}). In pipelines that fuse
operators before assigning layouts, the problem instance would differ.
Additionally, our objective treats operator costs and transpose costs as independent and
additive under sequential execution. Modern accelerators often overlap computation with data
movement or execute independent operators in parallel; incorporating such parallelism would require
moving from a sum-of-costs objective to a scheduling-aware formulation.

\section{Related Work}
\label{sec:related}

\paragraph{Layout selection in ML compilers.}
Modern ML compilers treat layout optimization as important, but approach it heuristically.
XLA~\cite{xla} includes a layout assignment pass
that selects physical layouts (minor-to-major dimension orderings) for tensors in the computational
graph. The algorithm operates in three phases: first assigning locally-optimal layouts to
``influential'' operators such as convolutions and matrix multiplications, then propagating layouts
forward and backward through the graph, and finally inserting copy operations where layout
mismatches remain. This greedy approach provides no formal guarantees and can produce suboptimal
solutions when local decisions conflict. Similarly, TVM~\cite{tvm18} specifies preferred layouts per
operator and inserts layout transformations when producer and consumer expectations diverge. The
oneDNN Graph Compiler~\cite{onednn} handles layouts as part of a hybrid approach combining
expert-tuned kernels with graph-wide optimizations, though layout decisions are primarily driven by
predefined templates rather than systematic optimization.

Several recent systems have developed specialized heuristics for particular deployment scenarios.
SmartMem~\cite{smartmem} addresses layout transformation elimination for mobile GPUs by classifying
operators by layout sensitivity and applying rule-based propagation to reduce conversions. In our
framework, SmartMem can be understood as a greedy heuristic that restricts layout domains, fixes
layouts at sensitive operators by minimizing local operator cost, and propagates layouts through
flexible operators to minimize local transpose cost. NeoCPU~\cite{neocpu} applies similar ideas to
CNN inference on CPUs, selecting efficient blocked formats for convolutions and propagating these
choices through the graph, with layout-oblivious operators absorbing the propagated layout without
conversion cost. ALT~\cite{alt} extends this line of work by jointly optimizing graph-level layouts
and operator-level loop transformations, using reinforcement learning to search the combined space.
When layout requirements conflict between operators, ALT propagates layouts trying to avoid
inserting transposes.

These systems all show that layout matters in practice, but each relies on hand-tuned heuristics, and none defines the problem those heuristics are approximating.

\paragraph{Formal optimization approaches for layout selection.}
Despite the ubiquity of layout decisions in ML compilers, there has been little formal analysis of
their computational complexity. Prior work often notes that layout optimization is ``\NP-hard'' or
``combinatorial'' without proof or qualification~\cite{smartmem,neocpu,alt}. To our knowledge, this
is the first work to formally prove \NP-hardness for layout selection under realistic restrictions
and to identify graph-structural assumptions under which optimal solutions with guarantees are
possible. Our treewidth-based algorithm builds on standard machinery from parameterized complexity.
Courcelle's theorem shows that a broad class of graph problems can be solved in linear time on
graphs of bounded treewidth~\cite{Courcelle90}, and later work extends this to optimization
objectives like ours~\cite{ArnborgLS91}. Treewidth is also familiar in compilers, where the small
treewidth of structured programs enables efficient register allocation~\cite{Thorup98}. These
results show that a polynomial-time algorithm exists, but the generic constructions have very large
constants. We therefore give an explicit dynamic program for the layout selection objective
(Section~\ref{sec:treewidth}), with a concrete state space and runtime bound.

Several works formulate layout-related decisions as constraint problems. VTensor~\cite{vtensor}
introduces a programming framework based on ``virtual tensors'' that abstracts away physical layouts
from the programmer, applying global optimization via integer linear programming (ILP) to minimize
layout transformations. While the authors claim the search space is large and related to \NP-hard
problems, they provide minimal algorithmic details and no formal complexity analysis. Similarly,
Mirage~\cite{mirage} searches layout strategies while generating candidate programs and checks their
equivalence using probabilistic finite-field testing backed by an SMT solver. Rieber et
al.~\cite{oppermann2021joint} address joint program and layout
transformations for specialized hardware using constraint programming, with a focus on enabling
specific operators on hardware with restrictive requirements. Our \NP-hardness results explain why these
systems resort to ILP or \prob{MaxSAT} solvers, and our tractable cases show
when simpler algorithms might suffice.

Complementary work has developed mathematical frameworks for representing layouts. Linear
Layouts~\cite{linearlayouts} models tensor layouts as linear maps over $\mathbb{F}_2$, enabling
generic layout-to-layout conversions, while recent work~\cite{layoutabstractions} unifies CuTe and
Triton layout abstractions using integer set relations. These contributions are largely orthogonal
to ours: they provide expressive representations for layouts or motivate the importance of
layout optimization, while we study the computational problem of selecting among layouts.

\paragraph{Related compiler optimizations.}
Layout selection intersects with other compiler optimizations, such as loop transformations or
scheduling. On general-purpose hardware, the two are partially interchangeable: loop interchange can
recover the locality benefits of a transposed operand without materializing it, while physically
transposing or packing data can expose contiguous accesses that enable auto-vectorization of
transformer kernels~\cite{vectorizing-transformers}; we discuss this interplay, and why it largely
disappears on partitioned systolic arrays, in Section~\ref{sec:conversions}. ALT jointly optimizes
data layouts and loop schedules over a set of layout transformation primitives,
propagating layouts rather than materializing explicit transposes when possible~\cite{alt}.
Autoschedulers such as Halide's tree-search-based scheduler~\cite{halide} and Ansor~\cite{ansor} explore large spaces of schedules using learned cost models that implicitly capture layout effects, but neither formulates nor analyzes the layout selection subproblem.
A related line of research,
originating in feature-based iterative compilation and autotuning, learns from program
characteristics how to set optimization
parameters~\cite{cummins-stencil,CumminsP0L17,AshouriKCPS19}; our per-family findings (\textbf{RQ1})
suggest layout selection as a natural application for such workload-adaptive policy selection.
TpuGraphs dataset~\cite{tpugraphs} provides benchmarks for learning cost models over tensor
computational graphs including layout configurations, documenting the practical importance of layout
decisions and the challenges of accurate cost modeling for tensor programs.

Work on specialized accelerators faces layout optimization challenges due to hardware constraints. AutoSA~\cite{autosa} automatically generates systolic arrays for FPGAs using polyhedral analysis, while FlexSA~\cite{flexsa} and ArrayFlex~\cite{arrayflex} explore flexible systolic array architectures. Spatial~\cite{spatial} provides a language and compiler for application accelerators with explicit constructs for memory hierarchies. Our work is complementary,
since our formulation can serve as a subproblem within such compilers.

\section{Conclusions}
\label{sec:conclusions}

This paper establishes \prob{Layout Selection} as a first-class optimization problem for ML
compilers. We formalized the problem as cost minimization over dataflow graphs, proved its
\NP-hardness, and gave an explicit dynamic program for the tractable bounded-treewidth regime,
alongside a \prob{MaxSAT} encoding for general instances. Our evaluation within a production
compiler for AWS Trainium demonstrates that the formalization is practical. Solver-based methods are
viable for transformer workloads, where they match and sometimes outperform the production
heuristic. On vision and multimodal models they do not. Since the solver is optimal under the
objective, the remaining gap lies in the cost model rather than in the search.

\paragraph{Future directions.}
On the modeling side, extending the formulation to capture joint layout and tiling decisions, or to
account for operator parallelism, would widen its applicability. On the algorithmic side, our
treewidth results suggest hybrid solvers that retain domain-specific heuristics while providing
guarantees on solution quality. On the empirical side, learned cost models trained on hardware
profiling data could improve the fidelity of the objective and close the gap between solver-optimal
and hardware-optimal layouts.

\bibliographystyle{ACM-Reference-Format}
\bibliography{layout.bib}

\newpage
\appendix
\section{Details on Trainium Graph Compiler}
\label{app:compiler}

\begin{figure}[!ht]
    \centering
    \includegraphics[height=5.2cm]{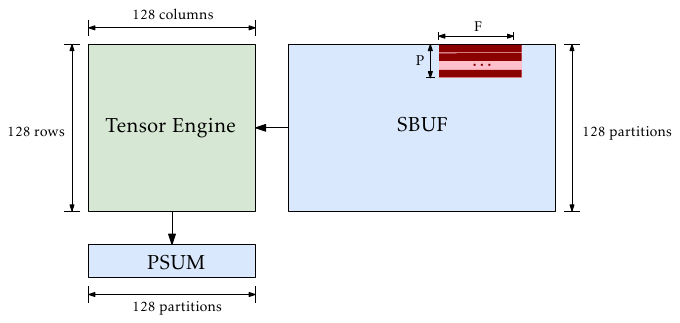}
    \caption{The NeuronCore tensor engine: a $128 \times 128$ systolic array connected to two
        on-chip SRAMs. The state buffer (SBUF) stores input tensors partitioned across 128
        partitions, each
        feeding one row of the array. The partial sum buffer (PSUM) accumulates output results.
        A tensor (red) has layout \layout{P,F}.
        Illustration adapted from~\cite{neuron_compiler}.}
    \Description[Block diagram of the tensor engine with the SBUF and PSUM on-chip
    SRAMs]{A schematic with three rectangular blocks. The tensor engine, shaded green, is a
        square block annotated $128$~rows along its left edge and $128$~columns along its top
        edge. To its right sits the state buffer SBUF, shaded blue and annotated $128$~partitions
        along its right edge; an arrow runs leftward from SBUF into the tensor engine, one
        partition feeding each row of the array. Below the tensor engine sits the partial sum
        buffer PSUM, also shaded blue and annotated $128$~partitions along its bottom edge, with an
        arrow running downward from the tensor engine into it. A tensor resident in SBUF is
        highlighted in red to show its layout \layout{P,F}, with its first dimension spread across
        the partitions and its second dimension laid out along the free axis within each
        partition.}
    \label{fig:trn}
\end{figure}

AWS Trainium~\cite{aws_trainium} is a machine learning accelerator designed to efficiently execute
deep learning workloads using specialized compute engines. Each Trainium device contains multiple
NeuronCores, which perform tensor operations such as matrix multiplication and vector computations.
Each NeuronCore includes a $128 \times 128$ systolic array optimized for matrix multiplication and
convolutions, along with vector and scalar engines for reductions, activations, and other
operations. A key component of the NeuronCore architecture is the \emph{state buffer} (SBUF), a
24\,MB high-bandwidth on-chip scratchpad memory used to store tensors and intermediate results
during execution. Unlike conventional hardware caches, SBUF is software-managed, meaning the
compiler is responsible for organizing and placing data so that the compute engines can access it
efficiently. SBUF is organized into $128$ \emph{independent partitions}, each feeding one row of the
systolic array; refer to Figure~\ref{fig:trn}.

The Neuron Graph Compiler~\cite{neuron_compiler} compiles computation graphs from machine learning
frameworks into executable binaries for NeuronCores. The compiler performs graph-level and
loop-level optimizations to map tensor operations onto hardware instructions, determining how
tensors are tiled, scheduled, and placed in on-chip memory. Because data movement between device
memory and SBUF is expensive, the compiler structures execution so that active tensors remain in
SBUF while operators consume them.

Tensor layout plays a central role because SBUF's partitioned organization exposes hardware
parallelism. Tensors in SBUF are described using two logical axes: a \emph{partition axis}~(P),
which distributes elements across the 128 hardware partitions for parallel execution, and a
\emph{free axis}~(F), which stores the remaining elements within each partition. The dimension
mapped to the partition axis determines the degree of parallelism available to the hardware, while
the free axis is processed sequentially within each partition. Since each partition feeds one row of
the systolic array, the dimension streamed into the array must lie within partitions, not across
them. For instance, computing $C = A \times B$ requires $A$ in \layout{F,P} layout (columns
partitioned across the array rows) and $B$ in \layout{P,F} layout (rows partitioned); the result $C$
is produced in \layout{F,P} layout. Hardware instructions expect specific mappings of tensor
dimensions to these axes, and mismatches between adjacent operators require explicit layout
conversions. To support high hardware utilization, the Neuron compiler must select tensor layouts
that both satisfy architectural constraints and minimize costly layout conversions, making layout
selection an important component of the compilation pipeline.

\section{Proof of Theorem~\ref{thm:treewidth}}
\label{app:treewidth-proof}

We give the formal dynamic program sketched in Section~\ref{sec:treewidth}, followed by its
correctness and runtime analysis.

\begin{proof}
    Let $G=(V,E)$ be the directed dataflow graph and let $H$ be its underlying undirected graph. Let
    $(T,\{X_i\})$ be the given tree decomposition of $H$ with width~$\tw$; transform it into a nice
    decomposition of the same width with root~$r$ in linear time~\cite{Kloks94}. We use dynamic
    programming over the decomposition.

    \paragraph{DP table.}
    For each node $i$ of~$T$, let $V_i$ be the set of vertices occurring in some bag in the subtree
    rooted at~$i$, and let $\Out(X_i)$ be the set of tensors produced by operators currently
    in $X_i$. For each tensor $t \in \Out(X_i)$, let
    $\mathbf{S}=(S_t)_{t\in\Out(X_i)}$ track the set of distinct consumer layouts
    $S_t \subseteq L(t)$ that have been requested by consumer operators strictly inside
    $V_i \setminus X_i$ (i.e., those already forgotten). For a labeling $f$ of the bag $X_i$ with
    $f(o)\in\St(o)$, define $\OPT(i,f,\mathbf{S})$ as the minimum cost for the subtree at $i$. This
    cost accounts for all operator costs belonging to $V_i \setminus X_i$ and all transpose costs
    for tensors produced in $V_i \setminus X_i$. Note that when a tensor's producer is forgotten
    (moves from $X_i$ to $V_i \setminus X_i$), its aggregated consumer set is complete, and its full
    transpose penalty is paid. At the root~$r$, $X_r=\emptyset$ and $V_r=V$, thus
    $\OPT(r,\emptyset,\emptyset)$ equals the global optimum.

    \paragraph{Inductive computation.}
    We compute $\OPT$ bottom-up by node type.

    \begin{itemize}
        \item \textbf{Leaf:} $\OPT(i,\emptyset,\emptyset)=0$.

        \item \textbf{Introduce $v$:} ($X_i=X_j\cup\{v\}$ for some $v\in V\setminus X_j$):
        \[
        \OPT(i,f,\mathbf{S})
        =
        \begin{cases}
            \OPT\bigl(j,\,f|_{X_j},\,\mathbf{S}|_{X_j}\bigr) & \text{if } S_t = \emptyset \text{
                for } t \in \Out(v), \\
            \infty & \text{otherwise.}
        \end{cases}
        \]
        The configuration $f(v)$ is recorded. If $v$ produces a tensor $t$, its tracked consumer set
        $S_t$ must be strictly empty, since no operators in the subtree $V_i \setminus X_i$ could
        have requested it before $v$ was introduced.

        \item \textbf{Forget $v$:} ($X_j=X_i\cup\{v\}$ for some $v\in X_j$):
        Let $f_v = f\cup\{v\mapsto c_v\}$. When $v$ is forgotten, two things happen. First, $v$'s
        input layout requirements are recorded in $\mathbf{S}$. Since $\mathbf{S}$ tracks requests
        from \emph{forgotten} operators, and $v$ becomes forgotten here, the parent state must
        include $v$'s request for each of its input tensors that is still tracked. Thus, for every
        $u \in \In(v) \cap \Out(X_i)$, the parent state $\mathbf{S}$ must satisfy
        $S_u = S'_u \cup \{c_v|_u\}$, and $S_u = S'_u$ for all other $u \in \Out(X_i)$.
        Second, for the tensor $t \in \Out(v)$ produced by $v$, its consumer set is now fully
        determined: it is the union of requests from previously forgotten descendants ($S'_t$) and
        requests from any still-active consumers currently in $X_i$. Let
        $S^{\text{final}}_t = S'_t \cup \{ f(o)|_t \mid o \in X_i, t \in \In(o) \}$. Since $t$ is
        removed from $\Out(X_i)$, $S_t$ no longer exists in $\mathbf{S}$ and we must minimize
        over all feasible subsets $S'_t$. We compute:
        \[
        \OPT(i,f,\mathbf{S})
        =
        \min_{c_v\in \St(v)}\left\{ \min_{\text{compatible } \mathbf{S}'}
        \Bigl\{
        \OPT\bigl(j,\,f_v,\,\mathbf{S}'\bigr)
        + \opcost_{v}(c_v)
        + \sum_{t \in \Out(v)} \sum_{\ell' \in S^{\text{final}}_t}
        \tcost_t\bigl(c_v|_t,\,\ell'\bigr)
        \Bigr\}\right\},
        \]
        where the inner minimum is over all states $\mathbf{S}'$ satisfying the subset relations
        defined above.
        We evaluate this transition forward, so each child entry determines and updates one parent
        entry and is processed once.

        \item \textbf{Join} ($X_i=X_{j_1}=X_{j_2}$):
        \[
        \OPT(i,f,\mathbf{S})
        =
        \min_{\mathbf{S} = \mathbf{S}_1 \cup \mathbf{S}_2}
        \Bigl( \OPT(j_1,f,\mathbf{S}_1)+\OPT(j_2,f,\mathbf{S}_2) \Bigr),
        \]
        where $\mathbf{S}_1 \cup \mathbf{S}_2$ takes the point-wise union $S^{(1)}_t \cup S^{(2)}_t$
        for each tensor $t$, ensuring that identical consumer layouts requested in both subtrees
        merge without double-counting.
    \end{itemize}

    \paragraph{Correctness.}
    We argue by induction on $T$, from the leaves upwards, that $\OPT(i,f,\mathbf{S})$ equals the
    minimum of
    \[
    \sum_{o \in V_i \setminus X_i} \opcost_o\bigl(g(o)\bigr)
    \;+\;
    \sum_{t \in \Out(V_i \setminus X_i)} \;\; \sum_{\ell \in \cns(t)}
    \tcost_t\bigl(\prd(t),\,\ell\bigr)
    \]
    over all labelings $g$ of $V_i$ that agree with $f$ on $X_i$ and induce exactly the request sets
    $\mathbf{S}$. Call this the \emph{invariant}. We use two properties of a nice decomposition. Every
    vertex is introduced exactly once and forgotten exactly once, so the bags containing a vertex $v$
    form a contiguous stretch of $T$ reaching from its introduce node up to the child of its forget
    node. And at a join node, $V_{j_1}\cap V_{j_2}=X_i$, with no edge of $H$ between
    $V_{j_1}\setminus X_i$ and $V_{j_2}\setminus X_i$.

    \emph{Leaf.} $V_i=X_i=\emptyset$, both sums are empty, and the invariant holds with value $0$.

    \emph{Introduce $v$.} Since $v$ is introduced here, $v\notin V_j$, and
    $V_i\setminus X_i = V_j\setminus X_j$. Neither sum changes: no operator leaves the frontier, and
    no tensor's producer does either. The recurrence accordingly copies the child value, so it
    remains to justify the guard. Let $t\in\Out(v)$ and suppose some consumer $o$ of $t$ lay in
    $V_i\setminus X_i$. Then $o$ was forgotten strictly below $i$, so no bag at or above $i$
    contains $o$; but $o$ is adjacent to $v$ in $H$ and must share a bag with it, and every bag
    containing $v$ is at or above $i$, a contradiction. Hence $S_t=\emptyset$ is the only request
    set consistent with the invariant, and entries with $S_t\neq\emptyset$ are correctly assigned
    cost $\infty$.

    \emph{Forget $v$.} Here $V_i=V_j$ and $V_i\setminus X_i=(V_j\setminus X_j)\cup\{v\}$, so relative
    to the child the first sum grows by exactly $\opcost_v(c_v)$ and the second grows by the
    transpose term of the tensor $t\in\Out(v)$, whose producer has just left the frontier. The claim
    to verify is that $S^{\text{final}}_t=\cns(t)$. Every consumer $o$ of $t$ is adjacent to $v$, so
    $o$ and $v$ share a bag; that bag lies in the subtree rooted at $j$, because all bags containing
    $v$ do, and therefore $o\in V_j=V_i$. Each such $o$ is thus either in $X_i$, where its request
    $f(o)|_t$ is read off the bag labeling, or in $V_i\setminus X_i$, where it was forgotten at some
    node below $i$ and the same rule added its request to $S'_t$. So $S^{\text{final}}_t$ lists the
    layouts of $\cns(t)$, each exactly once, and
    $\sum_{\ell'\in S^{\text{final}}_t}\tcost_t(c_v|_t,\ell')$ is precisely the term the objective
    charges for $t$. Minimizing over $c_v$ and over the compatible $\mathbf{S}'$ ranges over exactly
    those labelings of $V_j$ that restrict to $f$ on $X_i$ and induce $\mathbf{S}$.

    \emph{Join.} The two subtrees share no operator outside $X_i$ and no edge between their private
    parts, so every labeling of $V_i$ agreeing with $f$ splits uniquely into labelings of $V_{j_1}$
    and $V_{j_2}$, and the two child costs are sums over the disjoint sets $V_{j_1}\setminus X_i$ and
    $V_{j_2}\setminus X_i$. A layout requested in both subtrees belongs to a tensor still produced in
    $X_i$, whose transpose cost has not been charged yet. Taking the point-wise union instead of a
    multiset sum therefore avoids charging that layout twice once the producer is forgotten.
    Minimizing over all splits $\mathbf{S}=\mathbf{S}_1\cup\mathbf{S}_2$ therefore establishes the
    invariant at $i$.

    At the root, $X_r=\emptyset$ and $V_r=V$, so the invariant makes
    $\OPT(r,\emptyset,\emptyset)$ the optimum of \prob{Layout Selection}.

    \paragraph{Runtime.}
    Each bag has size at most $\tw+1$. The base state $f$ has at most $M^{\tw+1}$ entries per node.
    The augmented state $\mathbf{S}$ tracks a subset $S_t \subseteq L(t)$ \emph{only} for the
    tensors actively produced by operators in $X_i$. This adds at most $(2^{M})^{\tw+1}$
    combinations. A forget node processes at most $M^{\tw+2}2^{M(\tw+2)}$ child entries in
    $O(\tw)$ time each. At a join node, each layout may occur in neither, either, or both child
    request sets, giving at most $4^{M(\tw+1)}$ pairs per bag labeling. Since a nice decomposition
    has $O(\tw|V|)$ nodes, the total running time is
    $O\!\left(|V|\cdot \tw^2 \cdot M^{\tw+2}\cdot 4^{M(\tw+1)}\right)$, which is polynomial in
    $|V|$ for fixed $\tw$ and $M$.

    The optimal assignment itself can be recovered via standard backtracking: at each node, record
    the minimizing configuration, then trace from root to leaves.
\end{proof}

\end{document}